\pdfoutput=1
\documentclass[lettersize,journal]{IEEEtran}
\usepackage{amsmath,amsfonts}
\usepackage{algorithmic}
\usepackage{algorithm}

\usepackage{array}
\usepackage[caption=false,font=footnotesize,labelfont=rm,textfont=rm]{subfig}
\usepackage{multirow}
\usepackage{textcomp}
\usepackage[hidelinks]{hyperref}
\usepackage{stfloats}
\usepackage{amssymb}
\usepackage{url}
\newtheorem{definition}{Definition} 
\newtheorem{example}{Example}
\newtheorem{theorem}{Theorem}
\newtheorem{lemma}{Lemma}
\usepackage{verbatim}
\usepackage{graphicx}
\usepackage{cite}
\usepackage{color}
\usepackage{tikz}
\usetikzlibrary{positioning}
\usetikzlibrary{arrows.meta, positioning, decorations.pathreplacing}
\usepackage{makecell}

\begin{document}
	
	\title{GRAND for Gaussian Intersymbol Interference Channels}
	
	\author{Zhuang Li, Wenyi Zhang,~\IEEEmembership{Senior Member,~IEEE}
		\thanks{Z. Li and W. Zhang are with Department of Electronic Engineering and Information Science, University of Science and Technology of China, Hefei, China (Email: wenyizha@ustc.edu.cn).}
	}
	
	
	
	\maketitle
	
	\begin{abstract}
		Channel decoding is a challenging task in communication channels exhibiting memory effects. In this work, we apply the recently proposed decoding paradigm of guessing random additive noise decoding (GRAND) to channels with memory, focusing on linear Gaussian intersymbol interference (ISI) channels. For describing error patterns (EPs), we introduce the concept of error burst to account for the memory effect, and define sequence reliability to characterize the likelihood of EP. Based on sequence reliability, we obtain the optimal GRAND algorithm as a generalization of soft GRAND (SGRAND) for linear Gaussian ISI channels, termed SGRAND-ISI, which is equivalent to the maximum-likelihood (ML) decoding algorithm. We then develop order-reliability-bit (ORB) GRAND algorithms based on SGRAND-ISI, to facilitate implementation. In numerical experiments, our proposed algorithms achieve multiple-dB improvements compared to GRAND algorithms which ignore channel memory, and can often attain performance within 0.1--0.2dB of the ML lower bound. We also compare our proposed algorithms with the recently proposed ORBGRAND-Approximate Independence algorithm for handling channel memory, and observe a performance gain of at least 0.5dB at block error rate of $10^{-3}$, meanwhile incurring a substantially lower computational complexity.
	\end{abstract}
	
	\begin{IEEEkeywords}
		Channel memory, guessing random additive noise decoding, intersymbol interference, maximum-likelihood decoding, order-reliability-bit, soft decoding.
	\end{IEEEkeywords}
	
	\section{Introduction}
	\label{sec:intro}
	
	Novel applications such as virtual and augmented reality, vehicle-to-vehicle communication, and machine-type communication have driven the demand for effective ultra-reliable low latency communications (URLLC) solutions \cite{durisi2016toward,chen2018ultra}. In this context, a new paradigm called guessing random additive noise decoding (GRAND) \cite{duffy2019capacity} has been considered as a promising technique for URLLC. In a nutshell, GRAND performs decoding by explicitly identifying the error pattern (EP) introduced during transmission, and it is particularly well-suited to short blocklength and medium-to-high-rate regimes.
	
	While most GRAND-related studies focus on memoryless channels, several extensions have been proposed to cope with channels with memory. GRAND Markov Order (GRAND-MO) \cite{an2022keep} targets two-state Markov channels, avoids the commonly adopted interleaver, and shows notable gains over the Berlekamp--Massey decoder \cite{Berlekamp1968AlgebraicCT, massey1969shift}. However, GRAND-MO is a hard-decision decoder and thus cannot exploit soft reliability information. Order-reliability-bit (ORB) GRAND-Approximate Independence (ORBGRAND-AI) \cite{duffy2023using} leverages soft information by partitioning the received sequence into blocks and assuming inter-block independence. Although this avoids interleaving, the inter-block independence approximation limits its ability to faithfully capture memory effects, resulting in suboptimal performance.
	
	Motivated by these limitations, in this paper, we focus on a canonical and practically important class of memory channels characterized by inter-symbol interference (ISI).
	We work with first principles and demonstrate how to optimally handle ISI using GRAND without interleaving. Specifically, we introduce the concept of error bursts to characterize the structure and statistics of the EPs induced by ISI channels, and define sequence reliability based on the hard detection sequence~\cite{viterbi1967error}. Building on these, we generalize the GRAND criterion in~\cite{liu2022orbgrand} from memoryless channels to ISI channels. By utilizing the exact value of  sequence reliability as the decoding metric for generating EPs, maximum-likelihood (ML) decoding can be achieved, and the corresponding decoding algorithm is called soft GRAND-ISI (SGRAND-ISI).
	
	SGRAND-ISI requires on-the-fly computation of sequence reliability, which is costly and hence unfavorable for hardware implementation. To mitigate this, we develop ORBGRAND-ISI, a hardware-friendly suboptimal alternative that preserves the reliability-ordered querying principle while substantially reducing real-time computations. Furthermore, by exploiting the cumulative distribution function (CDF) of sequence reliability, we propose CDF-ORBGRAND-ISI, which provides a notable performance improvement over ORBGRAND-ISI with only a marginal increase in complexity.
	
	We conduct numerical experiments to assess the proposed decoders over both first-order and second-order ISI channels. Simulation results show that, for BCH and polar codes, CDF-ORBGRAND-ISI achieves a gain of at least 2dB even at a block error rate (BLER) of $10^{-1}$ over ORBGRAND which ignores channel memory. Moreover, relative to ORBGRAND-AI, CDF-ORBGRAND-ISI still provides at least $0.5$dB gain at $\text{BLER}=10^{-3}$ and operates within 0.1--0.2dB of the ML lower bound. We also quantify decoding complexity, demonstrating that CDF-ORBGRAND-ISI is expected to incur significantly less computational costs than ORBGRAND-AI.
	
	The Gaussian ISI channel model serves as the first step towards wireless channels exhibiting significant multipath effects, when the signaling bandwidth exceeds the channel coherence bandwidth \cite{biglieri1998fading}. The ISI effect is also ubiquitous in various scenarios, including wired telephone lines and coaxial cables \cite{720542}, underwater channels \cite{van2013propagation}, optical fiber communications \cite{saleh2003coherence}, magnetic storage systems \cite{siegel1991modulation}, and so on. Therefore the methodology established in this work is also of potential relevance in those scenarios.
	
	Equalization has been a standard approach for mitigating ISI. Maximum-likelihood sequence estimation (MLSE) \cite{forney1972maximum} implemented via the Viterbi algorithm \cite{viterbi1967error} achieves optimal sequence detection. But the exponential growth of the computational complexity with the channel memory order motivates widely used suboptimal equalization techniques such as linear equalization (LE) \cite{lucky1968principles} and decision-feedback equalization (DFE) \cite{monsen1971feedback,belfiore1979decision}. For channels with high memory order, an effective solution is channel shortening, in which the receiver first processes the received signal to an impulse response with reduced memory order, and then conducts MLSE \cite{falconer1973adaptive,qureshi1973adaptive,lee1977maximum,beare1978choice}. Even with channel shortening, however, channel decoding is still severely affected by the residual channel memory. Conventional decoder design ignoring channel memory incurs significant performance losses \cite{an2022keep,duffy2023using}. Turbo equalization \cite{douillard1995iterative} iteratively exchanges soft information between equalizer and decoder to improve the performance of decoding, but it meets challenges in URLLC as it possibly incurs substantial complexity, storage, and latency due to its iterative architecture and interleaving/deinterleaving operations.
	
	The remaining part of this paper is organized as follows. Section \ref{Model} establishes the system model and briefly reviews the background of GRAND. Section \ref{First} analyzes the characteristics of EPs in first-order ISI channels and introduces the corresponding GRAND-ISI algorithm. Section \ref{Higher} extends the solution in Section \ref{First} to higher-order ISI channels. Section \ref{Simulation} presents numerical experiments for first- and second-order ISI channels. Section \ref{Conclusion} concludes this paper.
	
	\section{Preliminaries}\label{Model}
	\subsection{System Model}\label{Channel Model}
	
	We consider binary phase shift keying (BPSK) modulation over an ISI channel in which the input-output relationship at the $i$-th channel use is
	\begin{align}\label{eqn:ISI-channel}
		\mathsf{Y}_i=\sum_{l=0}^{L}h_l\mathsf{W}_{i-l}+\mathsf{Z}_i,
	\end{align}
	where $\mathsf{W}_i$ is the channel input drawn from $\left\{+1, -1\right\}$, $\mathsf{Z}_i\sim\mathcal{N}(0,\sigma^2)$ is the Gaussian noise, $\mathsf{Y}_i$ is the channel output, and $h_0\cdots,h_L$ are the channel impulse response coefficients, which are assumed to be known to both sides of the transceiver. Without loss of generality, we assume that $\sum_{l=0}^{L}h_l^2=1$, so the signal-to-noise ratio (SNR) is ${1}/{\sigma^2}$.
	
	We adopt a codebook $\mathcal{C}$ with code length $N$ and code rate $R$ bits per channel use. The number of messages is given by $M = \lceil 2^{NR} \rceil$. When message $m \in \left\{1, \ldots, M\right\}$ is transmitted, the corresponding codeword is denoted by $\underline{x}(m) = [x_1(m), \ldots, x_N(m)]$, whose elements are drawn from $\left\{0, 1\right\}$. The channel input is obtained from the transmitted codeword $\underline{x}(m)$ according to the mapping $\mathsf{W}_i=1-2\mathsf{X}_i(m)$ for $i=1,  \cdots, N$.
	
	Upon receiving a realization $\underline{y}$ of the channel output sequence $\underline{\mathsf{Y}}$, the ML decoding rule is as follows:
	\begin{align}\label{eq:ML}
		m^*=&\arg\max\limits_{m=1,\cdots,M}P_{\underline{\mathsf{Y}} \vert \underline{\mathsf{X}}}\left(\underline{y} \vert \underline{x}(m) \right)\nonumber\\
		=&\arg\max\limits_{m=1,\cdots,M}\sum_{i=1}^{N}\ln P_{\mathsf{Y}_i\vert \underline{\mathsf{X}}_{i-L:i}}(y_i\vert \underline{x}_{i-L:i}(m)).
	\end{align}
	In \eqref{eq:ML}, $\underline{x}_{i - L:i}$ represents $x_{i-L}, \ldots, x_i$. For a given $\underline{y}$ and any sequence $\underline{x} \in \{0, 1\}^N$, whether it is a codeword in $\mathcal{C}$ or not, we define its weight function as
	\begin{align}
		\Lambda(\underline{x}, \underline{y})=\sum_{i=1}^{N} \ln P_{\mathsf{Y}_i\vert \underline{\mathsf{X}}_{i-L:i}}(y_i\vert \underline{x}_{i-L:i}).
	\end{align}
	In addition, we define the hard detection sequence as
	\begin{align}\label{Hard}
		\underline{x}^*=&\arg\max\limits_{\underline{x} \in \{0, 1\}^N}\Lambda(\underline{x}, \underline{y}),
	\end{align}
	which can be obtained via the Viterbi algorithm.
	
	\subsection{Guessing Random Additive Noise Decoding}\label{introduction of GRAND}
	
	GRAND has received widespread attention as a noise-centric decoding paradigm---more precisely, an EP-centric decoding paradigm. The basic idea of GRAND is to sequentially test a series of EPs until a codeword is recovered. When EPs are tested in descending order of likelihood, GRAND realizes ML decoding, even for channels with memory \cite{duffy2019capacity}. While GRAND was originally designed for hard-decision channels with binary alphabets, the incorporation of soft reliability information into the decoding process has led to several variants of GRAND. Symbol reliability GRAND \cite{duffy2021guessing} indicates the reliability of hard-decision channel output by a single-bit reliability indicator. SGRAND \cite{solomon2020soft} generates EPs based on the magnitudes of log-likelihood ratios (LLRs), thereby fully exploiting soft reliability information and achieving ML decoding. However, the generation of EPs based on exact values of LLRs poses challenges for hardware implementation. ORBGRAND \cite{duffy2022ordered} utilizes the relationship of ranking among LLR magnitudes rather than their exact values to generate EPs. This approach enables efficient generation of EPs and hence facilitates hardware implementation \cite{condo2021high, abbas2022high, condo2022fixed}. Notably, ORBGRAND has been shown to achieve an information rate that is very close to channel capacity for commonly used channel models including the additive white Gaussian noise (AWGN) channel \cite{liu2022orbgrand, li2024orbgrand}. Moreover, rank companding ensures that ORBGRAND is exactly capacity-achieving, and the resulting variant is termed CDF-ORBGRAND \cite{li2025orbgrand}.
	
	For the system model in Section~\ref{Channel Model}, GRAND operates as follows. Given the channel output sequence $\underline{y}$, we form the corresponding hard detection sequence $\underline{x}^{\ast}$. GRAND then generates an ordered list of EPs and, for each EP $\underline{e}$ in this order, conducts a query to test whether $\underline{x}^{\ast}\oplus \underline{e}$ is a valid codeword. The procedure terminates once a valid codeword is found, or when the prescribed maximum number of queries is reached. The corresponding pseudo-code is provided in Algorithm~\ref{alg:GRAND}.

	\begin{algorithm}[!h]
		\caption{GRAND for an ISI channel}
		\label{alg:GRAND}
		\begin{algorithmic}[1]
			\renewcommand{\algorithmicrequire}{\textbf{Input:}}
			\renewcommand{\algorithmicensure}{\textbf{Output:}}
			\REQUIRE channel output sequence $\underline{y}$, parity check matrix $H$, maximum number of queries $Q$    
			\ENSURE $\underline{\hat{x}}$ or ABANDON    
			\STATE $\underline{x}^* \leftarrow$ \eqref{Hard}
			\STATE $q\leftarrow 0$
			\WHILE{$q\leq Q$}
			\STATE $q=q+1$
			\STATE $\underline{e}\leftarrow$ next EP
			\IF {$H\cdot (\underline{x}^*\oplus\underline{e})^T=\underline{0}$}
			\RETURN $\underline{\hat{x}}= \underline{x}^*\oplus\underline{e}$
			\ENDIF
			\ENDWHILE
			\RETURN ABANDON
		\end{algorithmic}
	\end{algorithm}
	
	The optimal ordering of EPs, from the most likely to the least likely, is crucial. Unlike memoryless channels wherein EPs can be sorted according to the Hamming weights \cite{duffy2021guessing} or the sum of LLR magnitudes \cite{li2024orbgrand}, for channels with memory, optimally sorting EPs requires further considerations, as will be elaborated in the subsequent sections.
	
	\section{GRAND-ISI for First-order ISI channels}\label{First}
	
	In this section, we investigate the properties of EPs in first-order ISI channels ($L=1$ in \eqref{eqn:ISI-channel}) and propose a general form of GRAND, which we term as GRAND-ISI, and its variants.
	
	\subsection{Error Burst and Sequence Reliability} 
	
	We frequently consider sets of indices over $\{1, \ldots, N\}$, and for convenience, we assume that the elements of any such set are arranged in ascending order. For such a set $\mathcal{S}$, we denote its smallest and largest elements by $\min(\mathcal{S})$ and $\max(\mathcal{S})$, respectively. Furthermore, we define $$\text{L}_1(\mathcal{S})=\max\{\min(\mathcal{S})-1, 1\}, \text{R}_1(\mathcal{S})=\min\{\max(\mathcal{S})+1, N\}.$$
	
	We denote by $f_\mathcal{S}(\underline{x})$ the sequence obtained by flipping the bits of $\underline{x}$ at the positions specified by $\mathcal{S}$. The $i$-th element of this sequence is denoted by $[f_\mathcal{S}(\underline{x})]_i$, and the subsequence from index $i$ to $j$ is denoted by $[f_\mathcal{S}(\underline{x})]_{i:j}$.
	
	When transmitting $\underline{x}$, the corresponding target EP is given by $\underline{e} = \underline{x}^*\oplus \underline{x}$, where $\underline{x}^*$ is given by (\ref{Hard}). We denote the corresponding index set $\mathcal{I}(\underline{e})=\left\{i: e_i = 1\right\}$. 
	\begin{definition}\label{defn:error-burst}
		For an EP $\underline{e}$ with index set $\mathcal{I}(\underline{e})$, we call the following unique partitioning of $\mathcal{I}(\underline{e})$, $\{\mathcal{I}_i: i = 1, \ldots, \zeta(\mathcal{I}(\underline{e}))\}$, the error bursts of $\underline{e}$:
		\begin{itemize}
			\item The sets $\{\mathcal{I}_i: i = 1, \ldots, \zeta(\mathcal{I}(\underline{e}))\}$ form a partitioning of $\mathcal{I}(\underline{e})$, i.e.,
			$$\displaystyle\bigcup_{i=1}^{\zeta(\mathcal{I}(\underline{e}))} \mathcal{I}_i = \mathcal{I}(\underline{e}) \quad\text{and} \quad\mathcal{I}_i \cap \mathcal{I}_j = \emptyset\quad \forall i \neq j.$$
			\item For each $\mathcal{I}_i$, its elements are consecutive integers.
			\item For any two $\mathcal{I}_i$ and $\mathcal{I}_j$, $i \neq j$, there exists at least one integer separating them, i.e., $\max(\mathcal{I}_i) < k < \min(\mathcal{I}_j)$ for some integer $k$.
		\end{itemize} 
	\end{definition}
	
	\begin{example}
		Suppose that the transmitted codeword is $\underline{x}=0000000$, and the hard detection sequence is $\underline{x}^*=0100111$. Therefore, $\underline{e}=\underline{x}^*\oplus\underline{x}=0100111$, $\mathcal{I}(\underline{e})=\left\{2,5,6,7\right\}$, $\mathcal{I}_1=\left\{2\right\}$, $\mathcal{I}_2=\left\{5,6,7\right\}$, and $\zeta(\mathcal{I}(\underline{e}))=2$.
	\end{example}
	
	Therefore, from the definition, when the code length is $N$, there are in total $\frac{N(N+1)}{2}$ possible error bursts, ranging from size $1$ to $N$. We denote all these possible error bursts as      
	\begin{align*}
		\text{IND}=&\Big\{
		\underbrace{\{1\},\ldots,\{N\}}_{\text{size }1},\\
		&\underbrace{\{1,2\},\ldots,\{N-1,N\}}_{\text{size }2},\\
		&\vdots\\
		&\underbrace{\{1,\ldots,N\}}_{\text{size }N}
		\Big\}.
	\end{align*}
	\begin{definition}\label{defn:sequence-reliability}
		Given a channel output sequence $\underline{y}$ and its hard detection sequence $\underline{x}^*$, for a set $\mathcal{S} \subseteq \left\{1,\cdots,N\right\}$, its sequence reliability is defined as:
		\begin{align}\label{Rel}
			\text{Rel}(\mathcal{S})=&\Lambda(\underline{x}^*, \underline{y}) - \Lambda(f_\mathcal{S}(\underline{x}^*), \underline{y}).
		\end{align} 
	\end{definition}
	
	\subsection{General Form of GRAND-ISI}
	
	In general, a GRAND-ISI algorithm tests EPs against the hard detection sequence, and we can describe its operation as follows:
	\begin{itemize}
		\item Enumerate all subsets of $\left\{1,\cdots,N\right\}$ and sort them in ascending order based on their reliability sums
		\begin{align}
			\text{RS}\triangleq\left\{\sum_{t=1}^{\zeta(\mathcal{I})}\gamma(\underline{y},\mathcal{I}_t):\mathcal{I}\subseteq\left\{1,\cdots,N\right\}\right\},
		\end{align} 
		where each subset $\mathcal{I}$ corresponds to an EP consisting of $\zeta(\mathcal{I})$ error bursts and $\mathcal{I}_t$ denotes the $t$-th error burst of $\mathcal{I}$, as specified by Definition \ref{defn:error-burst}. We adopt a general non-negative function $\gamma(\underline{y},\mathcal{I}_t)$ to quantify the reliability information of the channel output sequence $\underline{y}$ at the indices specified by $\mathcal{I}_t$; concrete examples will be described in the sequel.
		\item Following the order specified by sorting $\text{RS}$ in the preceding stage, for each subset $\mathcal{I}$, flip the corresponding bits in the hard detection sequence $\underline{x}^*$ to obtain $f_\mathcal{I}(\underline{x}^*)$.
		\item Test whether $f_\mathcal{I}(\underline{x}^*)$ is a codeword. If so, stop and declare it as the decoding outcome; otherwise, proceed to the next subset and repeat the testing procedure unless the maximum number of queries $Q$ is reached.
	\end{itemize}
	
	
	In fact, if $Q$ is equal to the size of $\text{RS}$, the GRAND-ISI algorithm can be equivalently represented as
	\begin{align}{\label{Guess}}
		\hat{m}=\arg\min\limits_{m=1,\cdots,M}\quad \sum_{t=1}^{\zeta(\mathcal{I}(\underline{e}(m)))} \gamma\big(\underline{y},\mathcal{I}_t(\underline{e}(m))\big).
	\end{align}
	%
	
	Different choices of $\gamma$ lead to different sorting outcomes of EPs, thereby resulting in different GRAND-ISI algorithms. This is a natural generalization of the unified representation of GRAND algorithms developed in \cite{liu2022orbgrand}. Here we list some examples as follows:
	\begin{itemize}
		\item When $\gamma(\underline{y},\mathcal{I}_t)=\text{Rel}(\mathcal{I}_t)$ as introduced in Definition \ref{defn:sequence-reliability}, (\ref{Guess}) is called SGRAND-ISI, since it is the extension of SGRAND \cite{solomon2020soft} for ISI channels. See Section \ref{subsec:sgrand-isi} for details.
		\item When $\gamma(\underline{y},\mathcal{I}_t)=r(\mathcal{I}_t)$, where $r(\mathcal{I}_t)$ is the rank of $\text{Rel}(\mathcal{I}_t)$ among the sequence reliabilities of all error bursts in $\text{IND}$, from $1$ (the smallest) to $\frac{N(N+1)}{2}$ (the largest), (\ref{Guess}) is called ORBGRAND-ISI as an extension of ORBGRAND \cite{duffy2022ordered} for ISI channels.
		\item When $\gamma(\underline{y},\mathcal{I}_t)=\Psi^{-1}\left(\frac{r(\mathcal{I}_t)}{N(N+1)/2+1}\right)$, where $\Psi(\cdot)$ is the CDF of the sequence reliabilities of all error bursts, (\ref{Guess}) is CDF-ORBGRAND-ISI. The idea underlying CDF-ORBGRAND can be traced back to \cite{duffy2022ordered}. Moreover, \cite{li2025orbgrand} establishes that, over memoryless binary-input channels, CDF-ORBGRAND is exactly capacity-achieving.
		
	\end{itemize}
	
	Similar to the fact that for memoryless channels SGRAND is equivalent to ML decoding, we have the following result regarding SGRAND-ISI.
	\begin{theorem}\label{thm:SGRAND-ISI-ML}
		When the maximum number of queries $Q$ is equal to the size of $\text{RS}$, SGRAND-ISI is equivalent to ML decoding.
	\end{theorem}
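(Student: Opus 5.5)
The plan is to reduce the claim to an additivity property of the sequence reliability across error bursts. Once that is established, the SGRAND-ISI metric in \eqref{Guess} becomes $\Lambda(\underline{x}^*,\underline{y})-\Lambda(\underline{x}(m),\underline{y})$. Its minimizer over $m$ is then exactly the maximizer in \eqref{eq:ML}.

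\emph{Step 1 (locality of $\Lambda$ for $L=1$).} Write $\Lambda(\underline{x},\underline{y})=\sum_{i=1}^N \lambda_i(x_{i-1},x_i)$ with $\lambda_i(a,b)=\ln P_{\mathsf{Y}_i\vert \underline{\mathsf{X}}_{i-1:i}}(y_i\vert a,b)$. The first term depends only on $x_1$, with $x_0$ a fixed known value, so it is unaffected by any flip. Flipping the bits in a set $\mathcal{S}$ changes only those terms $\lambda_i$ with $i\in\mathcal{S}$ or $i-1\in\mathcal{S}$. For a single burst $\mathcal{I}_t$ (consecutive indices), the affected index set is
\[
A_t=\mathcal{I}_t\cup\{\max(\mathcal{I}_t)+1\}\cap\{1,\dots,N\},
\]
which is contained in $[\min(\mathcal{I}_t),\text{R}_1(\mathcal{I}_t)]$.

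\emph{Step 2 (disjointness of the affected sets).} Take two bursts $\mathcal{I}_s,\mathcal{I}_t$ with $\max(\mathcal{I}_s)<k<\min(\mathcal{I}_t)$ for some integer $k$, as guaranteed by Definition~\ref{defn:error-burst}. Then $\max(\mathcal{I}_s)+1\le k<\min(\mathcal{I}_t)$, so $A_s\cap A_t=\emptyset$. Now fix $i\in A_t$. Both positions $i-1$ and $i$ lie in $\mathcal{I}_t\cup\{k:\ k\notin\mathcal{I}\}$, so neither is flipped by any other burst. Consequently
\[
\lambda_i\bigl([f_{\mathcal{I}}(\underline{x}^*)]_{i-1:i}\bigr)=\lambda_i\bigl([f_{\mathcal{I}_t}(\underline{x}^*)]_{i-1:i}\bigr).
\]
Terms with $i\notin\bigcup_t A_t$ are unchanged by both $f_{\mathcal{I}}$ and every $f_{\mathcal{I}_t}$.

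\emph{Step 3 (additivity).} Summing the per-term differences and grouping them by $t$ gives
\[
\text{Rel}(\mathcal{I})=\sum_{i}\Bigl[\lambda_i(\underline{x}^*)-\lambda_i\bigl(f_{\mathcal{I}}(\underline{x}^*)\bigr)\Bigr]=\sum_{t=1}^{\zeta(\mathcal{I})}\text{Rel}(\mathcal{I}_t).
\]
This is the step that needs care: one must check the boundary cases $\max(\mathcal{I}_t)=N$ and $\min(\mathcal{I}_t)=1$, and confirm that exactly one gap position is enough for $L=1$. Both follow from Step 2.

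\emph{Step 4 (conclusion).} For codeword $m$ we have $f_{\mathcal{I}(\underline{e}(m))}(\underline{x}^*)=\underline{x}(m)$. Hence the SGRAND-ISI objective equals $\Lambda(\underline{x}^*,\underline{y})-\Lambda(\underline{x}(m),\underline{y})$, and the first term does not depend on $m$. Minimizing over $m$ is therefore equivalent to maximizing $\Lambda(\underline{x}(m),\underline{y})$, which is \eqref{eq:ML}.

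When $Q=|\text{RS}|$, every subset is eventually queried, and the queries run in ascending order of this metric. The first valid codeword found is therefore an ML codeword, and ties are resolved the same way in both rules.
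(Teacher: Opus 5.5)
Your proof is correct and follows essentially the same route as the paper's proof in Appendix~\ref{Equivalence}. There, each $\text{Rel}(\mathcal{I}_t)$ is expanded as a sum over the window $\min(\mathcal{I}_t)\le i\le \text{R}_1(\mathcal{I}_t)$, $f_{\mathcal{I}_t}(\underline{x}^*)$ is replaced by $\underline{x}(m)$ on that window, and the zero terms for $i\notin\mathcal{B}(m)$ are added to obtain $\Lambda(\underline{x}^*,\underline{y})-\Lambda(\underline{x}(m),\underline{y})$. That is exactly your Steps 1--4, and your Step 2 makes explicit the disjointness of these windows, which the paper uses without stating.

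Two minor points. In Step 1 it is $x_0$, not the term $\lambda_1$, that no flip can affect ($\lambda_1$ does change when $1\in\mathcal{S}$, as your next sentence correctly says). Your claim that ties are ``resolved the same way'' is asserted rather than shown; however, the theorem, like the paper's argmin/argmax argument, only claims equivalence up to tie-breaking.
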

	\begin{IEEEproof}
		See Appendix \ref{Equivalence}.
	\end{IEEEproof}
	
	\subsection{Properties of Sequence Reliability}\label{subsec:properties-1st-ISI}
	
	In this subsection, we present several properties of sequence reliability, and they will be useful for calculating decoding metrics when implementing GRAND-ISI algorithms.
	\begin{lemma}\label{lemma:1}
		For any subset $\mathcal{W}\subseteq \left\{1,\cdots,N\right\}$, if $\mathcal{W}$ satisfies: $\mathcal{W} = \mathcal{W}_1\cup \mathcal{W}_2$, $\mathcal{W}_1\cap \mathcal{W}_2=\emptyset$, and $\min(\mathcal{W}_2)-\max(\mathcal{W}_1)\textgreater 1$, then $\mathcal{W}$ is called decomposable and we have
		\begin{align}
			\text{Rel}(\mathcal{W})=\text{Rel}(\mathcal{W}_1)+\text{Rel}(\mathcal{W}_2).
		\end{align}
	\end{lemma}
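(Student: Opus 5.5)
Write $\lambda_i(\underline{x})=\ln P_{\mathsf{Y}_i\vert \underline{\mathsf{X}}_{i-1:i}}(y_i\vert \underline{x}_{i-1:i})$ for the $i$-th term of the weight function, with $L=1$. Then $\Lambda(\underline{x},\underline{y})=\sum_{i=1}^N \lambda_i(\underline{x})$, and
\begin{align*}
\text{Rel}(\mathcal{S})=\sum_{i=1}^{N}\big[\lambda_i(\underline{x}^*)-\lambda_i(f_\mathcal{S}(\underline{x}^*))\big].
\end{align*}
The plan is to split this sum according to which flipped positions each term can see. The $i$-th term depends only on the bits at positions $i-1$ and $i$. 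So it vanishes unless $i\in\mathcal{S}$ or $i-1\in\mathcal{S}$. The set of indices that can contribute is therefore the ``support window'' $\mathcal{S}\cup(\mathcal{S}+1)$, intersected with $\{1,\ldots,N\}$.

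Let $a=\max(\mathcal{W}_1)$ and $b=\min(\mathcal{W}_2)$, so $b\ge a+2$. Partition the indices into $A=\{1,\ldots,a+1\}$ and $B=\{a+2,\ldots,N\}$. I will check two facts.

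\textbf{Terms with $i\in A$.} Both $i-1$ and $i$ are at most $a+1<b$. So the pair $\underline{x}_{i-1:i}$ agrees under $f_\mathcal{W}$ and under $f_{\mathcal{W}_1}$, because $\mathcal{W}_2$ touches nothing at or below position $a+1$. Also, $f_{\mathcal{W}_2}$ leaves this pair unchanged, so the $\mathcal{W}_2$ contribution on $A$ is zero.

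\textbf{Terms with $i\in B$.} Here $i-1\ge a+1>a$. So the pair agrees under $f_\mathcal{W}$ and $f_{\mathcal{W}_2}$, and $f_{\mathcal{W}_1}$ leaves it unchanged.

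Summing the term-by-term differences over $A$ and $B$ then gives
\begin{align*}
\text{Rel}(\mathcal{W})=\sum_{i\in A}\big[\lambda_i(\underline{x}^*)-\lambda_i(f_{\mathcal{W}_1}(\underline{x}^*))\big]+\sum_{i\in B}\big[\lambda_i(\underline{x}^*)-\lambda_i(f_{\mathcal{W}_2}(\underline{x}^*))\big].
\end{align*}
By the two facts, the first sum equals $\text{Rel}(\mathcal{W}_1)$ and the second equals $\text{Rel}(\mathcal{W}_2)$, since the missing terms in each are zero.

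The only delicate step is the boundary index $i=a+1$. It is essential that this term sees only $\mathcal{W}_1$, and this is exactly where the gap condition $b-a>1$ is used. If $b=a+1$, the term at $i=a+1$ would involve flips from both sets and additivity would fail. Boundary conventions at $i=1$, where $x_0$ is a fixed initial state, are harmless because that state is unaffected by flips in either set.
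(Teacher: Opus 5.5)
Your proof is correct and takes essentially the same approach as the paper. Both rely on the fact that the $i$-th log-likelihood term depends only on positions $i-1$ and $i$, and both split the index range right after $\max(\mathcal{W}_1)+1$, so that this boundary term sees only $\mathcal{W}_1$ (this is where the gap condition is used). The paper uses a three-block split whose middle block, from $\max(\mathcal{W}_1)+2$ to $\min(\mathcal{W}_2)-1$, vanishes explicitly; your two-block split $A$, $B$ simply absorbs that block.
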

	\begin{IEEEproof}
		See Appendix \ref{Lemma1}.	
	\end{IEEEproof}
	
	\begin{lemma}\label{lemma:2}
		If a subset $\mathcal{W}\subseteq \left\{1,\cdots,N\right\}$ satisfies $\mathcal{W}=\left\{a,a+1,\cdots,a+\lvert \mathcal{W}\rvert -1\right\}, a\in\mathbb{N}_+$, then $\mathcal{W}$ is called non-decomposable and we have
		\begin{align}\label{Lem2}
			&\text{Rel}(\mathcal{W})-\sum_{i=a}^{a+\lvert \mathcal{W}\rvert -1}\text{Rel}({\left\{i\right\}})\nonumber\\
			=&\sum_{i=a}^{a+\lvert \mathcal{W}\rvert -2}\Big(\text{Rel}(\left\{i, i+1\right\})-\text{Rel}(\left\{i\right\})-\text{Rel}(\left\{i+1\right\})\Big).
		\end{align}
	\end{lemma}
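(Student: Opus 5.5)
The plan is to write $\text{Rel}(\mathcal{S})$ as a sum of local terms, one per channel output, and then count which terms each side of \eqref{Lem2} collects. For $L=1$, put
\[
\lambda_i(u,v)=\ln P_{\mathsf{Y}_i\vert \mathsf{X}_{i-1},\mathsf{X}_i}(y_i\vert u,v),
\]
with the convention that for $i=1$ the term depends only on $x_1$. Then $\Lambda(\underline{x},\underline{y})=\sum_{i=1}^N\lambda_i(x_{i-1},x_i)$.

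For any set $\mathcal{S}$, the term $\lambda_i$ changes under $f_\mathcal{S}$ only if $i\in\mathcal{S}$ or $i-1\in\mathcal{S}$. So $\text{Rel}(\mathcal{S})=\sum_i D_i(\mathcal{S})$, where
\[
D_i(\mathcal{S})=\lambda_i(x^*_{i-1},x^*_i)-\lambda_i\big([f_\mathcal{S}(\underline{x}^*)]_{i-1},[f_\mathcal{S}(\underline{x}^*)]_i\big)
\]
depends only on whether $i-1\in\mathcal{S}$ and whether $i\in\mathcal{S}$. Write $A_i$, $B_i$, $C_i$ for $D_i$ when the flipped pattern of $(x_{i-1},x_i)$ is $(0,1)$, $(1,0)$ and $(1,1)$ respectively. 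The pattern $(0,0)$ contributes nothing.

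Next, compute each quantity in \eqref{Lem2}:
\begin{itemize}
\item A singleton gives $\text{Rel}(\{i\})=A_i+B_{i+1}$, dropping $B_{N+1}$ if $i=N$.
\item A pair gives $\text{Rel}(\{i,i+1\})=A_i+C_{i+1}+B_{i+2}$.
\item For $\mathcal{W}=\{a,\ldots,b\}$ with $b=a+|\mathcal{W}|-1$, $\text{Rel}(\mathcal{W})=A_a+\sum_{i=a+1}^{b}C_i+B_{b+1}$.
\end{itemize}
Hence each bracket on the right of \eqref{Lem2} equals
\[
\text{Rel}(\{i,i+1\})-\text{Rel}(\{i\})-\text{Rel}(\{i+1\})=C_{i+1}-B_{i+1}-A_{i+1}.
\]

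Then compare the two sides. The left side is
\[
A_a+\sum_{i=a+1}^bC_i+B_{b+1}-\sum_{i=a}^b(A_i+B_{i+1})=\sum_{i=a+1}^b(C_i-A_i-B_i),
\]
where $A_a$ and $B_{b+1}$ cancel. The right side is $\sum_{i=a}^{b-1}(C_{i+1}-A_{i+1}-B_{i+1})$, which is the same sum after reindexing.

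The main care point is the boundary handling, not the algebra. At $i=N$ the term $B_{N+1}$ must be absent on both sides, and it is, consistently. At $i=1$ there is no $\mathsf{X}_0$ dependence; treating $x_0$ as a fixed known symbol that is never flipped makes the formulas uniform. The case $|\mathcal{W}|=1$ is trivial, since both sides are zero.
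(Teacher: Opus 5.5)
Your proof is correct and follows essentially the same route as the paper. Both arguments split $\text{Rel}$ into per-output log-likelihood differences, show that each bracket on the right of \eqref{Lem2} is a four-term local combination at output $i+1$ (your $C_{i+1}-A_{i+1}-B_{i+1}$ is exactly the sum of the paper's two log-ratios), and identify the left side as the sum of these combinations over outputs $a+1,\ldots,a+|\mathcal{W}|-1$. Your $A_i,B_i,C_i$ bookkeeping just makes explicit the cancellation of the boundary terms $A_a$ and $B_{b+1}$, and the $i=1$ and $i=N$ conventions, which the paper leaves to ``algebraic manipulations.''
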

	\begin{IEEEproof}
		See Appendix \ref{Lemma2}.
	\end{IEEEproof}
	
	\begin{lemma}\label{lemma:3}
		In Lemma \ref{lemma:2}, $\text{Rel}(\left\{i, i+1\right\})-\text{Rel}(\left\{i\right\})-\text{Rel}(\left\{i+1\right\})$ can be uniquely determined by the hard detection sequence $\underline{x}^*$ according to
		\begin{align}
			&\text{Rel}(\left\{i, i+1\right\})-\text{Rel}(\left\{i\right\})-\text{Rel}(\left\{i+1\right\}) \nonumber\\
			=& \frac{4 h_0 h_1}{\sigma^2} \left[2\times\mathbf{1}(x_i^* = x_{i+1}^*)-1\right].
		\end{align}
	\end{lemma}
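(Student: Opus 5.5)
The plan is to compute the left-hand side directly from the Gaussian form of $\Lambda$. The key observation is that the combination $\text{Rel}(\{i,i+1\})-\text{Rel}(\{i\})-\text{Rel}(\{i+1\})$ is a mixed second-order difference of $\Lambda$, so only the cross term coupling $w_i$ and $w_{i+1}$ can survive.

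\emph{Step 1: Gaussian form of $\Lambda$.} For $L=1$, write $w_k=1-2x_k$, with the convention that there is no $w_0$ term, or a fixed known $w_0$, which does not affect the argument. Then
\begin{align*}
\Lambda(\underline{x},\underline{y})=C-\frac{1}{2\sigma^2}\sum_{k=1}^{N}\left(y_k-h_0w_k-h_1w_{k-1}\right)^2 ,
\end{align*}
where $C$ does not depend on $\underline{x}$.

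\emph{Step 2: reduce to a mixed difference.} Unfolding Definition~\ref{defn:sequence-reliability} and writing $\Lambda_{\mathcal{S}}=\Lambda(f_{\mathcal{S}}(\underline{x}^*),\underline{y})$, the quantity in the lemma equals
\begin{align*}
-\Lambda_{\emptyset}+\Lambda_{\{i\}}+\Lambda_{\{i+1\}}-\Lambda_{\{i,i+1\}} .
\end{align*}
This is, up to sign, the mixed difference of $\Lambda$ in the two variables $w_i$ and $w_{i+1}$. Each is evaluated at its hard-detection value and at the negated value.

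\emph{Step 3: identify which terms survive.} Expand every square in $\Lambda$ as a polynomial in the $w_k$, and classify the resulting monomials.
\begin{itemize}
\item Monomials involving neither $w_i$ nor $w_{i+1}$ are unchanged by all four flips, so they cancel.
\item Monomials involving only one of the two, such as $y_kw_i$ or $w_{i-1}w_i$, give a mixed difference of zero. This is because the alternating sum over the other variable's two states vanishes.
\item The squares $w_k^2=1$ are constants and also cancel.
\end{itemize}
The only surviving monomial is the product $w_iw_{i+1}$. It arises solely from the $k=i+1$ term $(y_{i+1}-h_0w_{i+1}-h_1w_i)^2$, with coefficient $2h_0h_1$. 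Hence $\Lambda$ contains $-\frac{h_0h_1}{\sigma^2}w_iw_{i+1}$.

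\emph{Step 4: evaluate the cross term.} Let $p=w_i^*w_{i+1}^*$. Plugging $c\,w_iw_{i+1}$ with $c=-h_0h_1/\sigma^2$ into the four-term expression from Step 2 gives
\begin{align*}
-cp+c(-p)+c(-p)-cp=-4cp=\frac{4h_0h_1}{\sigma^2}\,p .
\end{align*}
Finally, $p=+1$ exactly when $x_i^*=x_{i+1}^*$ and $p=-1$ otherwise, so $p=2\times\mathbf{1}(x_i^*=x_{i+1}^*)-1$. This is the claimed expression, and it depends on $\underline{y}$ only through $\underline{x}^*$.

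\emph{Main obstacle.} There is no deep obstacle; the only care needed is bookkeeping. One must check that the boundary terms, namely $k=i$ coupling $w_{i-1}$ and $k=i+2$ coupling $w_{i+2}$, contribute only single-variable monomials in $\{w_i,w_{i+1}\}$ and therefore cancel. One must also track the sign conventions between $\text{Rel}$ and $\Lambda$ correctly.
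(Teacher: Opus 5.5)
Your proof is correct. It rests on the same elementary computation as the paper's (expanding the Gaussian exponents), but organizes it differently.

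The paper proceeds in three moves:
\begin{itemize}
\item It uses the identity from the proof of Lemma~\ref{lemma:2} (Appendix~\ref{Lemma2}) to localize $\text{Rel}(\{i,i+1\})-\text{Rel}(\{i\})-\text{Rel}(\{i+1\})$ to two log-likelihood ratios of the single output $y_{i+1}$.
\item It evaluates one fixed orientation $\delta_i$ by expanding the four squared distances to the means $\pm(h_0+h_1)$ and $\pm(h_0-h_1)$. The $y_{i+1}$-dependence cancels, leaving $-4h_0h_1/\sigma^2$.
\item It leaves implicit the matching of the four cases of $(x_i^*,x_{i+1}^*)$ to $\pm\delta_i$.
\end{itemize}

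Your mixed-second-difference argument differs in three ways:
\begin{itemize}
\item It performs the localization and the cancellation of all $y$-dependent and boundary terms in one step, which shows directly why the answer depends on $\underline{y}$ only through $\underline{x}^*$.
\item The substitution $p=w_i^*w_{i+1}^*$ handles all four cases at once.
\item It does not need the ``algebraic manipulations'' that Appendix~\ref{Lemma2} only sketches.
\end{itemize}

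Your viewpoint also generalizes verbatim. For any $L$ and any $j<k$ with $k-j\le L$, the same argument gives the interaction $\frac{4}{\sigma^2}\left(\sum_{t=k}^{\min\{j+L,N\}}h_{t-j}h_{t-k}\right)w_j^*w_k^*$. Because $\Lambda$ is quadratic in $\underline{w}$, the reliability of any set then equals the sum of its single-flip reliabilities plus the sum of its pairwise interactions. That is a cleaner structural statement than the paper's Lemmas~\ref{lem:4} and~\ref{lem:5} for higher-order channels.

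In exchange, the paper's route gives a concrete evaluation of exactly the expression that appears in Lemma~\ref{lemma:2}'s proof, which keeps the two lemmas tightly linked.
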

	\begin{IEEEproof}
		See Appendix \ref{Lemma3}.
	\end{IEEEproof}
	
	\subsection{SGRAND-ISI}\label{subsec:sgrand-isi}
	
	If the hard detection sequence $\underline{x}^*$ is a codeword, SGRAND-ISI immediately outputs it as the decoding result. Otherwise, it needs to compute the sequence reliabilities for all the $\frac{N (N+1)}{2}$ error bursts in $\text{IND}$. Facilitated by lemmas presented in the previous subsection, this computation can be significantly simplified as follows:
	\begin{description}
		\item[Step 1]: Compute the sequence reliabilities for all error bursts of size $1$;
		\item[Step 2]: Iteratively compute the sequence reliabilities for longer error bursts using Lemmas \ref{lemma:2} and \ref{lemma:3}.
	\end{description}
	
	After obtaining the sequence reliabilities for all error bursts, we need to perform SGRAND-ISI according to (\ref{Guess}). Candidate EPs can be generated by the heap algorithm in \cite{solomon2020soft} or the parallel algorithm in \cite{wan2025parallelism}. In each iteration, such algorithms simply choose several error bursts from $\text{IND}$ to produce a candidate EP. However, in light of Definition \ref{defn:error-burst}, for such chosen error bursts to assemble a valid EP, they should not overlap, and any two of them should be separated by at least one position. Any candidate EP not satisfying these criteria is expurgated, and only the surviving ones lead to valid EPs to be tested by parity-check matrix verification.
	
	\begin{example}
		Consider a first-order ISI channel and blocklength $N=4$.
		Let $h_0=\sqrt{0.9},h_1=\sqrt{0.1}$, the noise variance be $\sigma^2=1$, so $\frac{4h_0h_1}{\sigma^2}=1.2$. Suppose that we obtain a realization where the received vector is
		\[
		\underline{y}=[0.63,\ 0.87,\ 0.8,\ -1.77].
		\]
		We obtain $\underline{x}^\star = 0001$ and the sequence reliabilities for all error bursts of size $1$ are:
		\[
		\begin{aligned}
			&\text{Rel}(\{1\})=1.14,\,\text{Rel}(\{2\})=0.95,\, \text{Rel}(\{3\})=0.39,\\
			&\text{Rel}(\{4\})=3.96.
		\end{aligned}
		\]
		Based on Lemmas \ref{lemma:2} and \ref{lemma:3}, the sequence reliabilities of all error bursts can be evaluated,
		\[
		\begin{aligned}
			&\text{Rel}(\{1,2\})=3.29,\, \text{Rel}(\{2,3\})=2.54,\, \text{Rel}(\{3,4\})=3.15,\\
			&\text{Rel}(\{1,2,3\})=4.88,\, \text{Rel}(\{2,3,4\})=5.3,\\
			&\text{Rel}(\{1,2,3,4\})=7.64.
		\end{aligned}
		\]
		
		With these, candidate EPs can be sequentially generated. Each candidate EP is represented as a length-$\frac{N(N+1)}{2}$ binary vector whose positions correspond to the elements in $\text{IND}$, and only those chosen error bursts have their corresponding positions marked as one. As discussed, an additional check is invoked to expurgate those invalid candidate EPs, i.e., those assembled by overlapping or consecutive error bursts. For example, the candidate EP $1001000000$ consists of error bursts $\left\{3\right\}$ and $\left\{2,3\right\}$, but since these two error bursts overlap, this candidate EP is invalid and immediately expurgated; the candidate EP $0000011000$ consists of error bursts $\left\{1,2\right\}$ and $\left\{4\right\}$, which satisfy the conditions in Definition \ref{defn:error-burst}, and hence this corresponds to a valid EP $1101$, which is then used for subsequent parity-check matrix verification.
	\end{example}
	
	The pseudocode of SGRAND-ISI is presented in Algorithm \ref{alg:SGRAND}. Therein, in addition to the maximum number of queries for parity-check matrix verification $Q$, we also introduce the maximum number of candidate EPs $Q_1$. In light of the description in the preceding paragraph, we generate at most $Q_1$ candidate EPs, regardless of whether it corresponds to a valid EP or not, and we allow at most $Q$ valid EPs for parity-check matrix verification.
	
	\begin{algorithm}[!h]
		\caption{SGRAND-ISI for ISI channels}
		\label{alg:SGRAND}
		\begin{algorithmic}[1]
			\renewcommand{\algorithmicrequire}{\textbf{Input:}}
			\renewcommand{\algorithmicensure}{\textbf{Output:}}
			\REQUIRE $\underline{x}^*, H, \underline{\text{Rel}}, \text{IND}, Q, Q_1$    
			\ENSURE $\underline{\hat{x}}$ or ABANDON    
			\STATE $\underline{r}\leftarrow \text{sort}(\underline{\text{Rel}})$
			\STATE $q\leftarrow 0, p\leftarrow 0, \mathcal{S}\leftarrow\left\{\underline{0}\right\}$
			\WHILE{$q\leq Q_1$}
			\STATE candidate EP $\underline{v}\leftarrow \arg\min\limits_{\underline{v}\in \mathcal{S}}\underline{v}\cdot \underline{\text{Rel}} $
			\STATE $\mathcal{S}=\mathcal{S}\backslash \left\{\underline{v}\right\}, q=q+1$
			\STATE $\text{t}\leftarrow$ check validity for $\underline{v}$
			\IF {$t=1$}
			\STATE $p=p+1$
			\STATE $\underline{e}\leftarrow (\underline{v}, \text{IND})$
			\IF {$H\cdot(\underline{x}^*\oplus\underline{e})^\text{T}=\underline{0}$}
			\RETURN $\underline{\hat{x}}= \underline{x}^*\oplus\underline{e}$
			\ENDIF
			\IF{$p=Q$}
			\RETURN ABANDON
			\ENDIF
			\ENDIF
			\IF{$\underline{v}=\underline{0}$}
			\STATE $i_*= 0$
			\ELSE
			\STATE $i_*= \max\left\{i:v_{r_i}\neq 0\right\}$
			\ENDIF
			\IF{$i_*<\operatorname{length}(\underline{r})$}
			\STATE $v_{r_{(i_*+1)}}= 1, \mathcal{S}=\mathcal{S}\cup \left\{\underline{v}\right\}$
			\IF{$i_*>0$}
			\STATE $v_{r_{i_*}}= 0, \mathcal{S}=\mathcal{S}\cup \left\{\underline{v}\right\}$
			\ENDIF
			\ENDIF
			\ENDWHILE
			\RETURN ABANDON
		\end{algorithmic}
	\end{algorithm}
	
	\subsection{ORBGRAND-ISI}\label{subsec:orb-isi}
	
	While SGRAND-ISI has the optimal performance as shown in Theorem \ref{thm:SGRAND-ISI-ML}, it requires the exact value of sequence reliability to generate EPs, hindering efficient hardware implementation. Leveraging the idea of ORBGRAND \cite{duffy2022ordered}, we can analogously devise the ORBGRAND-ISI algorithm, with $\gamma(\underline{y},\mathcal{I}_t)=r(\mathcal{I}_t)$, the rank of $\text{Rel}(\mathcal{I}_t)$ among the sequence reliabilities of all error bursts in $\text{IND}$, from $1$ (the smallest) to $\frac{N(N+1)}{2}$ (the largest), to facilitate hardware implementation.
	
	In order to alleviate the mismatch between sequence reliability and its ranking, we also consider a modification of ORBGRAND-ISI, termed CDF-ORBGRAND-ISI, which essentially leverages a companding technique to transform
	the rank of each sequence reliability according to the inverse CDF of sequence reliability, with $\gamma(\underline{y},\mathcal{I}_t)=\Psi^{-1}\left(\frac{r(\mathcal{I}_t)}{N(N+1)/2+1}\right)$. In Fig. \ref{fig:Psi} we present the CDFs of sequence reliability at different SNRs in a first-order ISI channel. Compared to ORBGRAND-ISI, the reliability information used by CDF-ORBGRAND-ISI is more aligned with the exact sequence reliability used by SGRAND-ISI, and our numerical experiments in Section \ref{Simulation} show that this leads to evident performance improvements.
	\begin{figure}[htb]
		\centering 
		\includegraphics[width=0.48\textwidth]{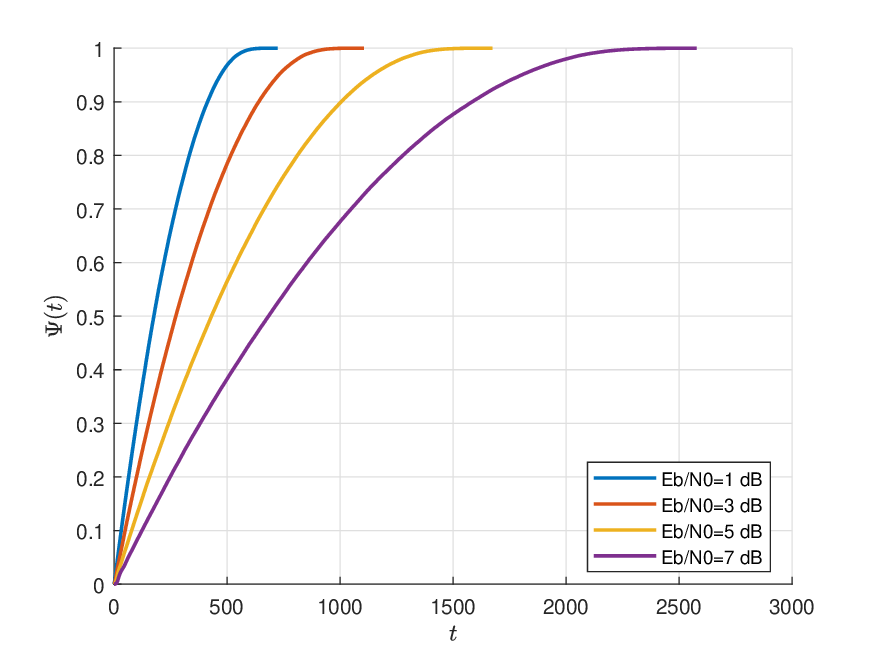}
		\caption{CDFs of sequence reliability in the first-order ISI channel for CA-Polar(128,114+6) with $(h_0,h_1)=(\sqrt{0.9},\sqrt{0.1})$, where the CRC-6 generator polynomial is $g(x)=x^6+x+1$. }
		\label{fig:Psi}
	\end{figure}
	
	\section{GRAND-ISI For Higher-Order ISI Channels}\label{Higher}
	
	In the previous section, we have studied GRAND-ISI for first-order ISI channels, and in this section we extend the analysis to more general $L$-th-order ISI channels. In this case, the definition of the smallest and largest elements for a set remains the same. For convenience of writing, we also define $\text{L}_L(\mathcal{S})=\max(\min(\mathcal{S})-L, 1 ), \text{R}_L(\mathcal{S})=\min(\max(\mathcal{S})+L, N )$. The definition of error burst is generalized as follows:
	\begin{definition}
		For an EP $\underline{e}$ with index set $\mathcal{I}(\underline{e})$, we call the following unique partitioning of $\mathcal{I}(\underline{e})$, $\{\mathcal{I}_i$: $i = 1, \ldots, \zeta(\mathcal{I}(\underline{e}))\}$, the error bursts of $\underline{e}$: 
		\begin{itemize}
			\item The sets $\mathcal{I}_i$ form a partitioning of $\mathcal{I}(\underline{e})$, i.e.,
			$$\displaystyle\bigcup_{i=1}^{\zeta(\mathcal{I}(\underline{e}))} \mathcal{I}_i = \mathcal{I}(\underline{e}) \quad\text{and} \quad\mathcal{I}_i \cap \mathcal{I}_j = \emptyset\quad \forall i \neq j.$$
			\item For each $\mathcal{I}_i = \{s_1 < s_2 < \ldots < s_{|\mathcal{I}_i|}\}$, $s_{j+1} - s_j \leq L$ for $j = 1,\cdots, \lvert\mathcal{I}_i\rvert-1$.
			\item For any two $\mathcal{I}_i$ and $\mathcal{I}_j$, $i \neq j$, there exists at least $L$ integers separating them, i.e., $\max(\mathcal{I}_i) + L < \min(\mathcal{I}_j)$.
		\end{itemize} 
	\end{definition}
	
	\begin{example}
		Consider $L=2$. Suppose that the transmitted codeword is $\underline{x}=000000$, and the hard detection sequence is $\underline{x}^*=101001$. Therefore, $\underline{e}=\underline{x}^*\oplus\underline{x}=101001, \mathcal{I}(\underline{e})=\left\{1,3,6\right\}, \mathcal{I}_1=\left\{1,3\right\}, \mathcal{I}_2=\left\{6\right\}$, and $\zeta(\mathcal{I}(\underline{e}))=2$.
	\end{example}
	
	The definition of sequence reliability and the general form of GRAND-ISI remain the same as those in first-order ISI channels.
	
	\subsection{Properties of Sequence Reliability}
	
	The sequence reliability for $L$-th-order ISI channels is more complicated than that for first-order ISI channels. Nevertheless, we still have the following properties.
	
	\begin{lemma}\label{lem:4}
		For any subset $\mathcal{W}\subseteq\left\{1,\cdots,N\right\}$, if $\mathcal{W}$ satisfies: $\mathcal{W} = \mathcal{W}_1\cup \mathcal{W}_2, \mathcal{W}_1\cap \mathcal{W}_2=\emptyset$, and $\min(\mathcal{W}_2)-\max(\mathcal{W}_1)> L$, then $\mathcal{W}$ is called decomposable and we have
		\begin{align}
			\text{Rel}(\mathcal{W})=\text{Rel}(\mathcal{W}_1)+\text{Rel}(\mathcal{W}_2).
		\end{align}
	\end{lemma}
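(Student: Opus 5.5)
The plan is to expand the definition of $\text{Rel}$ term by term and use the locality of the channel: the $i$-th term of $\Lambda$ depends only on $\underline{x}_{i-L:i}$. Write $\lambda_i(\underline{x}) = \ln P_{\mathsf{Y}_i\vert \underline{\mathsf{X}}_{i-L:i}}(y_i\vert \underline{x}_{i-L:i})$, so that $\Lambda(\underline{x},\underline{y})=\sum_{i=1}^N \lambda_i(\underline{x})$. Then
\begin{align*}
\text{Rel}(\mathcal{S}) = \sum_{i=1}^{N}\Big(\lambda_i(\underline{x}^*)-\lambda_i(f_{\mathcal{S}}(\underline{x}^*))\Big),
\end{align*}
and the $i$-th summand vanishes whenever $\{i-L,\ldots,i\}\cap\mathcal{S}=\emptyset$. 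This holds because then $[f_\mathcal{S}(\underline{x}^*)]_{i-L:i}=\underline{x}^*_{i-L:i}$. Hence only indices $i$ with $\{i-L,\ldots,i\}\cap\mathcal{S}\neq\emptyset$ contribute, i.e., $i\in\bigcup_{s\in\mathcal{S}}\{s,\ldots,s+L\}$, truncated to $\{1,\ldots,N\}$.

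Next, split the index range $\{1,\ldots,N\}$ into three parts:
\begin{itemize}
\item $A_1=\{i: i\le \max(\mathcal{W}_1)+L\}$;
\item $A_2=\{i: i\ge \min(\mathcal{W}_2)\}$;
\item the remaining indices, which lie strictly between the two.
\end{itemize}
The hypothesis $\min(\mathcal{W}_2)-\max(\mathcal{W}_1)>L$ gives $\max(\mathcal{W}_1)+L<\min(\mathcal{W}_2)$, so $A_1$ and $A_2$ are disjoint. For $i\in A_1$, the window $\{i-L,\ldots,i\}$ lies entirely below $\min(\mathcal{W}_2)$, so it misses $\mathcal{W}_2$. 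Consequently $[f_\mathcal{W}(\underline{x}^*)]_{i-L:i}=[f_{\mathcal{W}_1}(\underline{x}^*)]_{i-L:i}$. For $i\in A_2$, the window lies above $\max(\mathcal{W}_1)$, because $i-L\ge\min(\mathcal{W}_2)-L>\max(\mathcal{W}_1)$. It therefore misses $\mathcal{W}_1$, and $[f_\mathcal{W}(\underline{x}^*)]_{i-L:i}=[f_{\mathcal{W}_2}(\underline{x}^*)]_{i-L:i}$. For $i$ in neither set, the window meets neither $\mathcal{W}_1$ nor $\mathcal{W}_2$, so every version of the summand is zero.

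Combining these, the summands of $\text{Rel}(\mathcal{W})$ behave as follows:
\begin{itemize}
\item on $A_1$ they coincide with those of $\text{Rel}(\mathcal{W}_1)$;
\item on $A_2$ they coincide with those of $\text{Rel}(\mathcal{W}_2)$;
\item the summands of $\text{Rel}(\mathcal{W}_1)$ vanish on $A_2$, since windows there miss $\mathcal{W}_1$;
\item the summands of $\text{Rel}(\mathcal{W}_2)$ vanish on $A_1$.
\end{itemize}
Summing over all $i$ yields $\text{Rel}(\mathcal{W})=\text{Rel}(\mathcal{W}_1)+\text{Rel}(\mathcal{W}_2)$. This is the same argument as Lemma \ref{lemma:1} with the window length $1$ replaced by $L$.

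The only step requiring care is boundary handling. Near $i\le L$ the window $\{i-L,\ldots,i\}$ includes nonpositive indices, which correspond to fixed known initial symbols or are simply absent. Since these indices are never in $\mathcal{S}$, they are identical across $\underline{x}^*$, $f_{\mathcal{W}}(\underline{x}^*)$ and $f_{\mathcal{W}_i}(\underline{x}^*)$, and the argument is unaffected. I expect no real obstacle beyond stating this convention explicitly.
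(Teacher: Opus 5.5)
Your proof is correct and is essentially the paper's argument. The paper likewise splits the sum defining $\text{Rel}(\mathcal{W})$ into three ranges: indices up to $\max(\mathcal{W}_1)+L$, the gap, and indices from $\min(\mathcal{W}_2)$ on. On the corresponding windows it uses that $f_{\mathcal{W}}(\underline{x}^*)$ agrees with $f_{\mathcal{W}_1}(\underline{x}^*)$, with $\underline{x}^*$, and with $f_{\mathcal{W}_2}(\underline{x}^*)$, respectively. Your version, which sums over all $i$ and tracks which summands vanish, only makes that bookkeeping and the boundary convention more explicit.
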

	\begin{IEEEproof}
		See Appendix \ref{Lemma4}.
	\end{IEEEproof}
	
	For first-order ISI channels in Section \ref{First}, $\mathcal{W}$ is called non-decomposable if it satisfies $\mathcal{W}=\left\{a,a+1,\cdots,a+\lvert \mathcal{W}\rvert -1\right\}$ for some $a\in\mathbb{N}_+$. The definition of non-decomposability for higher-order ISI channels remains the same. Furthermore, there arises the partially-decomposable case, as described below:
	\begin{lemma}\label{lem:5}
		For a subset $\mathcal{W}\subseteq\left\{1,\cdots,N\right\}$, if $\mathcal{W}$ satisfies: $\mathcal{W} = \mathcal{W}_1\cup \mathcal{W}_2, \mathcal{W}_1\cap \mathcal{W}_2=\emptyset, 2\leq\min(\mathcal{W}_2)-\max(\mathcal{W}_1)\leq L$, then $\mathcal{W}$ is called partially-decomposable and satisfies:
		\begin{align}\label{partially}
			&\text{Rel}(\mathcal{W})\nonumber\\
			=&\text{Rel}(\mathcal{W}_1)+\text{Rel}(\mathcal{W}_2)\nonumber\\
			&+\sum_{i=\min(\mathcal{W}_2)}^{\text{R}_L(\mathcal{W}_1)}\left(\ln\frac{P_{\mathsf{Y}_i\vert \underline{\mathsf{X}}_{i-L:i}}(y_i\vert [f_{\mathcal{W}_1}(\underline{x}^*)]_{i-L:i})}{P_{\mathsf{Y}_i\vert \underline{\mathsf{X}}_{i-L:i}}(y_i\lvert \underline{x}_{i-L:i}^*)}\right.\nonumber\\
			&\left.+\ln\frac{P_{\mathsf{Y}_i\vert \underline{\mathsf{X}}_{i-L:i}}(y_i\lvert [f_{\mathcal{W}_2}(\underline{x}^*)]_{i-L:i})}{P_{\mathsf{Y}_i\vert \underline{\mathsf{X}}_{i-L:i}}(y_i\lvert [f_{\mathcal{W}}(\underline{x}^*)]_{i-L:i})}\right).
		\end{align}
	\end{lemma}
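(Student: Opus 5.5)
The approach is a direct term-by-term comparison of the sums defining $\Lambda$, exactly as in the decomposable case (Lemma~\ref{lem:4}). For brevity, write $\ell_i(\underline{x})=\ln P_{\mathsf{Y}_i\vert \underline{\mathsf{X}}_{i-L:i}}(y_i\vert \underline{x}_{i-L:i})$, so that $\Lambda(\underline{x},\underline{y})=\sum_{i=1}^N \ell_i(\underline{x})$ and
\begin{align*}
\text{Rel}(\mathcal{S})=\sum_{i=1}^N\big(\ell_i(\underline{x}^*)-\ell_i(f_{\mathcal{S}}(\underline{x}^*))\big).
\end{align*}
The key structural fact is that $\ell_i(\underline{x})$ depends only on $\underline{x}_{i-L:i}$. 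Hence $\ell_i(f_{\mathcal{S}}(\underline{x}^*))\neq\ell_i(\underline{x}^*)$ is possible only when $\{i-L,\dots,i\}\cap\mathcal{S}\neq\emptyset$, that is, only when $\min(\mathcal{S})\le i\le \max(\mathcal{S})+L$.

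The plan is to split the index range $\{1,\dots,N\}$ into three regions and compare $\ell_i(\underline{x}^*)-\ell_i(f_{\mathcal{W}}(\underline{x}^*))$ with the corresponding terms of $\text{Rel}(\mathcal{W}_1)+\text{Rel}(\mathcal{W}_2)$. We use $\max(\mathcal{W}_1)<\min(\mathcal{W}_2)$, which follows from $\min(\mathcal{W}_2)-\max(\mathcal{W}_1)\ge 2$.

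\emph{Region A:} $i<\min(\mathcal{W}_2)$. Here the window $\{i-L,\dots,i\}$ contains no element of $\mathcal{W}_2$. Therefore $[f_{\mathcal{W}}(\underline{x}^*)]_{i-L:i}=[f_{\mathcal{W}_1}(\underline{x}^*)]_{i-L:i}$, and $\ell_i(f_{\mathcal{W}_2}(\underline{x}^*))=\ell_i(\underline{x}^*)$. The $i$-th term of $\text{Rel}(\mathcal{W})$ equals the sum of the $i$-th terms of $\text{Rel}(\mathcal{W}_1)$ and $\text{Rel}(\mathcal{W}_2)$.

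\emph{Region B:} $i>\text{R}_L(\mathcal{W}_1)$, i.e.\ $i>\max(\mathcal{W}_1)+L$. Here the window contains no element of $\mathcal{W}_1$. By the symmetric argument, $f_{\mathcal{W}}$ agrees with $f_{\mathcal{W}_2}$ on the window and $f_{\mathcal{W}_1}$ agrees with $\underline{x}^*$, so again the terms add.

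\emph{Region C:} the overlap $\min(\mathcal{W}_2)\le i\le \text{R}_L(\mathcal{W}_1)$. This region is nonempty precisely because $\min(\mathcal{W}_2)-\max(\mathcal{W}_1)\le L$. In it neither simplification is available, so we compute the discrepancy directly:
\begin{align*}
&\big[\ell_i(\underline{x}^*)-\ell_i(f_{\mathcal{W}})\big]-\big[\ell_i(\underline{x}^*)-\ell_i(f_{\mathcal{W}_1})\big]-\big[\ell_i(\underline{x}^*)-\ell_i(f_{\mathcal{W}_2})\big]\\
&=\big(\ell_i(f_{\mathcal{W}_1})-\ell_i(\underline{x}^*)\big)+\big(\ell_i(f_{\mathcal{W}_2})-\ell_i(f_{\mathcal{W}})\big),
\end{align*}
where $f_{\mathcal{S}}$ abbreviates $f_{\mathcal{S}}(\underline{x}^*)$. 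This is exactly the summand in \eqref{partially}, written as two log-ratios.

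Summing over all $i$ then yields \eqref{partially}. The main care point is the boundary bookkeeping: checking that Regions A, B and C partition $\{1,\dots,N\}$, including the clipping of $\text{R}_L$ at $N$, and that the window argument is valid at both endpoints of Region C. No further obstacle is expected.
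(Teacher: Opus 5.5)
Your proposal is correct and follows essentially the same route as the paper's proof. Both split the index range at $\min(\mathcal{W}_2)$ and $\text{R}_L(\mathcal{W}_1)$, use that $f_{\mathcal{W}}(\underline{x}^*)$ agrees with $f_{\mathcal{W}_1}(\underline{x}^*)$ (resp.\ $f_{\mathcal{W}_2}(\underline{x}^*)$) on the relevant windows before (resp.\ after) the overlap, and collect the leftover discrepancy over $\min(\mathcal{W}_2)\le i\le \text{R}_L(\mathcal{W}_1)$; your handling of the clipping of $\text{R}_L$ at $N$ absorbs the paper's separate case $\max(\mathcal{W}_1)+L\ge N$, in which your Region B is empty.
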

	\begin{IEEEproof}
		See Appendix \ref{Lemma5}.
	\end{IEEEproof}\par
	
	\begin{example}\label{ex:partial_decomp_not_additive}
		Consider a second-order ISI channel with $h_0=\sqrt{0.8},h_1=\sqrt{0.15},h_2=\sqrt{0.05}$, and $\sigma^2=1$. Let code length be $N=4$. Suppose that we obtain a realization where the received vector is
		$$\underline{y}=[0.14,\ -0.28,\ 0.44,\ 0.66].$$
		We obtain $\underline{x}^\star = 0100$ and the sequence reliabilities are:
		\[
		\begin{aligned}
			&\text{Rel}(\{1\})=0.70,\,\text{Rel}(\{2\})=2.00,\, \text{Rel}(\{3\})=1.08,\\
			&\text{Rel}(\{4\})=0.89,\,\\
			&\text{Rel}(\{1,2\})=0.98,\, \text{Rel}(\{1,3\})=2.58,\,\text{Rel}(\{1,4\})=1.59,\,\\
			&\text{Rel}(\{2,3\})=1.35,\, \text{Rel}(\{2,4\})=2.09,\,\text{Rel}(\{3,4\})=3.35,\\
			&\text{Rel}(\{1,2,3\})=1.12,\, \text{Rel}(\{1,2,4\})=1.07,\,\\ &\text{Rel}(\{1,3,4\})=4.86,\,\text{Rel}(\{2,3,4\})=2.83,\,\\
			&\text{Rel}(\{1,2,3,4\})=2.6.
		\end{aligned}
		\]
		
		Now take the set $\mathcal{W}_p=\{1,3\}$ and $\mathcal{W}_p=\mathcal{W}_1\cup \mathcal{W}_2$ with
		$\mathcal{W}_1=\{1\}$ and $\mathcal{W}_2=\{3\}$. Then $\mathcal{W}_p$ is partially-decomposable, and we have
		$$\text{Rel}_{\mathcal{W}_p}=\text{Rel}(\{1,3\})=2.58,$$
		while$$\text{Rel}_{\mathcal{W}_1}+\text{Rel}_{\mathcal{W}_2}=\text{Rel}(\{1\})+\text{Rel}(\{3\})=1.78.$$
		
		In contrast, for $\mathcal{W}_d=\{1,4\}$ and  $\mathcal{W}_d=\mathcal{W}_1\cup \mathcal{W}_3$ with
		$\mathcal{W}_3=\{4\}$, $\mathcal{W}_d$ is decomposable, and we have
		$$\text{Rel}_{\mathcal{W}_d}=\text{Rel}(\{1,4\})=1.59=\text{Rel}_{\mathcal{W}_1}+\text{Rel}_{\mathcal{W}_3}.$$
	\end{example}

	In the case of first-order ISI channels, the properties of sequence reliability as described by the lemmas in Section \ref{subsec:properties-1st-ISI} enable iterative computation of sequence reliabilities of error bursts. For higher-order ISI channels, however, the properties are less useful than those in the first-order case, and it may sometimes be more convenient to compute sequence reliabilities of error bursts directly from Definition \ref{defn:sequence-reliability}.
	
	\subsection{Approximation Strategy}\label{subsec:higher-order-approximate}
	
	As shown in the previous subsection, for higher-order ISI channels, error bursts are classified into two types: non-decomposable and partially-decomposable. For non-decomposable error bursts, $N+1-i$ of them have size $i$, for $i = 1, \ldots, N$, resulting in a total of $\frac{N(N+1)}{2}$ error bursts. The total number of partially-decomposable error bursts, however, becomes prohibitively large, making exhaustive enumeration impractical. To address this difficulty, approximations are employed by restricting the consideration to error bursts of specific, usually small, sizes. Numerical experiments in the next section demonstrate that such an approximation strategy for SGRAND-ISI incurs little performance degradation when partially-decomposable error bursts are judiciously included for evaluation. The following example provides an illustration in second-order ISI channels.
	\begin{example}\label{example4}
		In second-order ISI channels, there are $(N-2)$ partially-decomposable error bursts of size 2, and $(N-4)+2(N-3)$ partially-decomposable error bursts of size 3 (assuming $N\geq 4$). For reducing the computational complexity, partially-decomposable error bursts of sizes larger than $3$ are not considered. After computing the sequence reliabilities for non-decomposable and partially-decomposable error bursts, the subsequent operations of decoding are essentially the same as those in first-order ISI channels.  
	\end{example}
	
	Correspondingly, when only the ranking of sequence reliability, rather than its exact value, is used to generate EPs, we obtain ORBGRAND-ISI and CDF-ORBGRAND-ISI algorithms, similar to the description in Section \ref{subsec:orb-isi}.
	
	\section{Numerical Experiments}\label{Simulation}
	
	In this section, we present results of numerical experiments. We mainly study two codes: CA-Polar(128,114+6) with CRC-6 generator polynomial $g(x) = x^6 + x + 1$ and BCH(127,113) for illustration.
	
	\subsection{First-order ISI Channels}\label{subsec:simulation-1st}
	
	First, we present numerical experimental results for first-order ISI channels. 
	
	For comparison, we study the following decoding algorithms:
	\begin{itemize}
		\item SGRAND-ISI;
		\item ORBGRAND-ISI;
		\item 2line-ORBGRAND-ISI which adopts a piecewise linear approximation of CDF as in \cite{duffy2022ordered} to enhance ORBGRAND-ISI;
		\item CDF-ORBGRAND-ISI. 
		\item ORBGRAND-AI in \cite{duffy2023using}.\footnote{To facilitate a fair comparison, two query limits are also imposed in the implementation of ORBGRAND-AI: the maximum number of candidate EPs $Q_1$, and the maximum number of valid EPs $Q$ used for parity-check matrix verification.}
		
		For all the above five decoding algorithms we set $Q = 1 \times 10^4$ and $Q_1 = 1.5 \times 10^5$.
		\item ORBGRAND without considering channel memory, with $Q = 1.5 \times 10^5$.
		
	\end{itemize}
	We also include a baseline of genie-aided ML lower bound.
	
\begin{figure*}[!t]
	\centering
	\subfloat[CA-Polar(128, 114+6).%
	\label{fig:ISI0.9Polar}]
	{
		\includegraphics[width=0.46\textwidth]{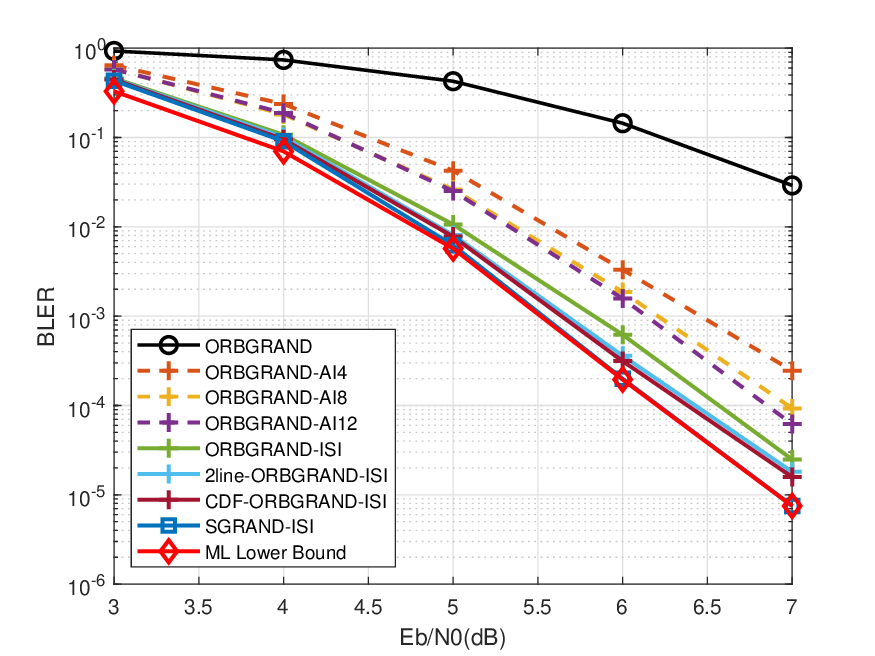}
	}
	\hfill
	\subfloat[BCH(127, 113).%
	\label{fig:ISI0.9BCH}]
	{
		\includegraphics[width=0.46\textwidth]{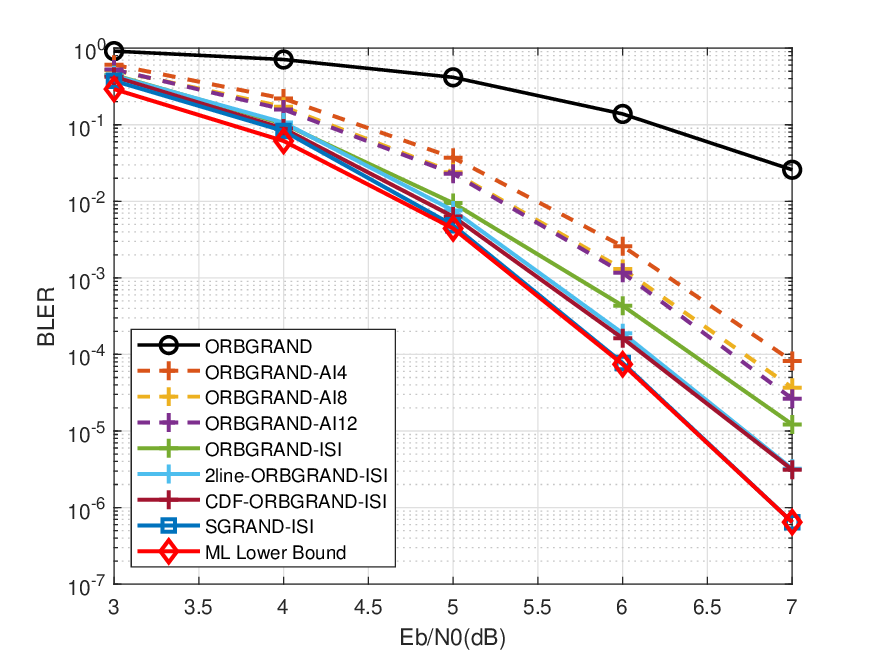}
	}
	\caption{Performance in the first-order ISI channel with $h_0=\sqrt{0.9}$ and $h_1=\sqrt{0.1}$.}
	\label{fig:ISI0.9_compare}
\end{figure*}
	
	For $h_0=\sqrt{0.9}, h_1=\sqrt{0.1}$, the results for CA-Polar(128, 114+6) and BCH(127, 113) are shown in Fig. \ref{fig:ISI0.9Polar} and Fig. \ref{fig:ISI0.9BCH}, respectively. We observe that SGRAND-ISI exhibits negligible deviation from the ML lower bound, with minor discrepancies only noticeable in the low-SNR regime, which is attributed to the truncation for the maximum number of queries; --- if the constraints on $Q$ and $Q_1$ were removed, SGRAND-ISI would achieve the ML decoding performance.
	
	Compared to ORBGRAND ignoring channel memory, CDF-ORBGRAND-ISI achieves nearly a 2dB improvement, even at a BLER of $10^{-1}$. Relative to ORBGRAND-AI12, CDF-ORBGRAND-ISI still provides a 0.6dB improvement at a BLER of $10^{-3}$, narrowing the gap to the ML lower bound to just 0.1dB. CDF-ORBGRAND-ISI and 2line-ORBGRAND-ISI achieve very similar performance, both providing a 0.2dB performance gain over ORBGRAND-ISI at a BLER of $10^{-3}$.
	
	\begin{figure*}[htbp]%
		\centering
		\includegraphics[width=0.98\textwidth]{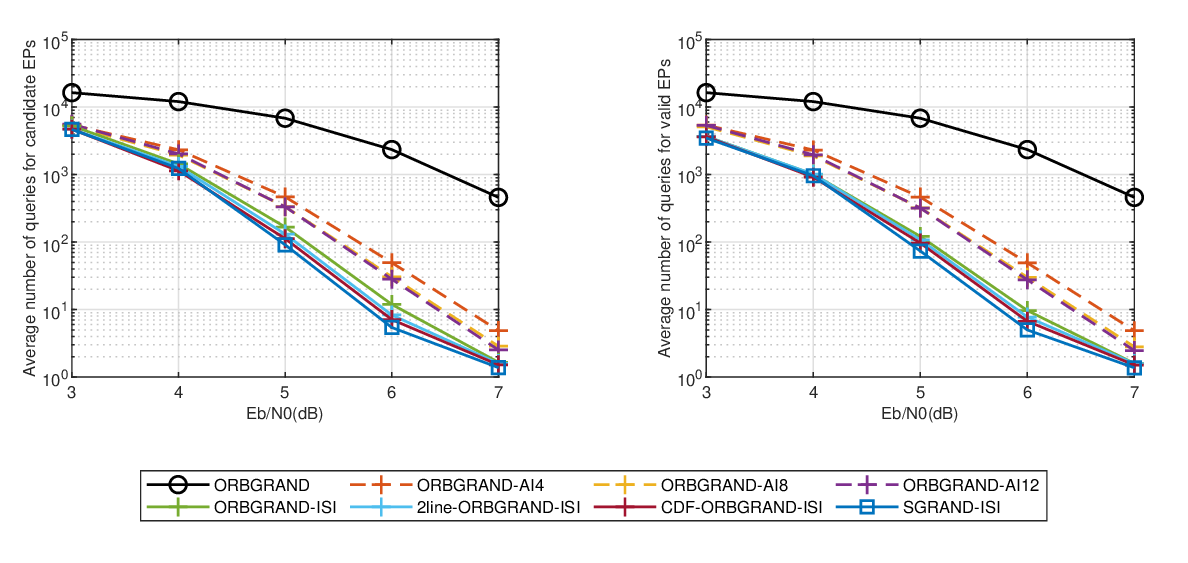}
		\caption{Average number of queries in first-order ISI channel with $h_0=\sqrt{0.9}, h_1=\sqrt{0.1}$ for CA-Polar(128, 114+6).}
		\label{fig:queryISI0.9Polar}
	\end{figure*}
	
	Moreover, we present the average number of queries for candidate EPs and the average number of queries for valid EPs in Fig. \ref{fig:queryISI0.9Polar} and Fig. \ref{fig:queryISI0.9BCH}. Among all algorithms, SGRAND-ISI incurs the fewest queries, and CDF-ORBGRAND-ISI the second fewest. Although SGRAND-ISI requires fewer queries than CDF-ORBGRAND-ISI, it depends on exact values of sequence reliabilities to dynamically generate EPs, which in turn leads to the highest computational complexity in implementation.

	\begin{figure*}[htbp]%
		\centering
		\includegraphics[width=0.98\textwidth]{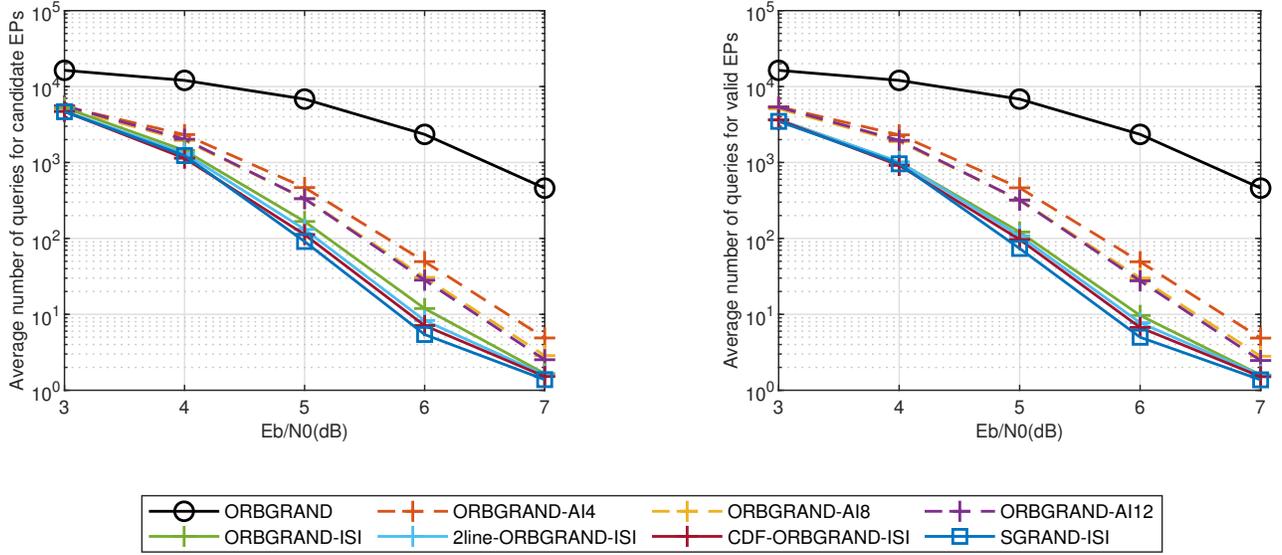}
		\caption{Average number of queries in first-order ISI channel with $h_0=\sqrt{0.9}, h_1=\sqrt{0.1}$ for BCH(127, 113).}
		\label{fig:queryISI0.9BCH}
	\end{figure*}
	
	As ISI strengthens, for instance, when $h_0=\sqrt{0.6}, h_1=\sqrt{0.4}$, the decoding performance of various algorithms is presented in Fig. \ref{fig:ISI0.6}. We observe that ORBGRAND ignoring channel memory suffers from severe performance degradation, with BLER approaching 100\%. In contrast, our proposed algorithms sustain robust decoding performance and yield substantial gains over ORBGRAND-AI. Specifically, at a BLER of $10^{-3}$, CDF-ORBGRAND-ISI achieves a 1dB improvement compared to ORBGRAND-AI12, remaining within 0.15dB of the ML lower bound.
	
	\begin{figure}[htbp]%
		\centering
		\includegraphics[width=0.48\textwidth]{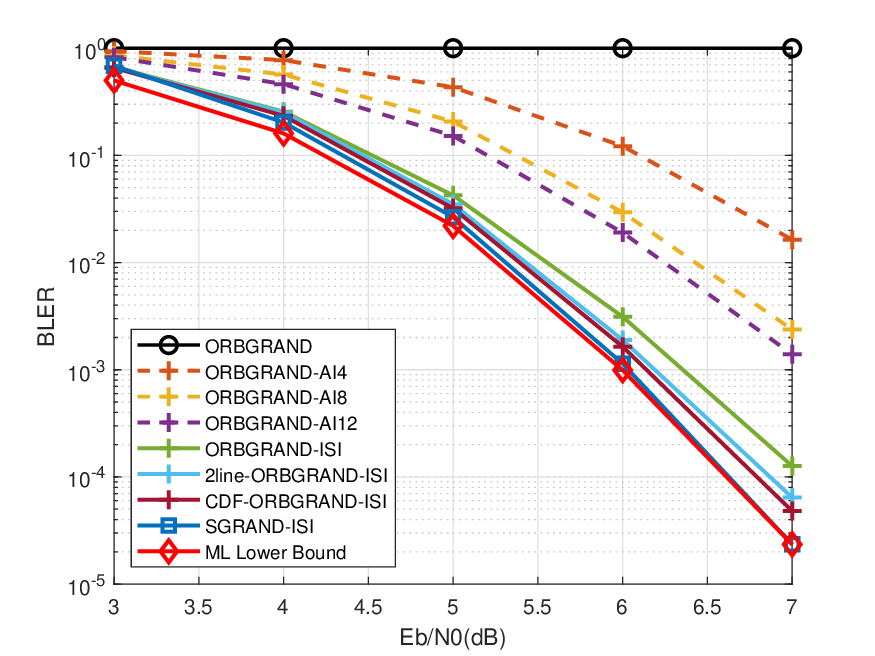}
		\caption{Performance in first-order ISI channel with $h_0=\sqrt{0.6}, h_1=\sqrt{0.4}$ for BCH(127, 113).}
		\label{fig:ISI0.6}
	\end{figure}
	\begin{figure}[htbp]
		\centering 
		\includegraphics[width=0.45\textwidth]{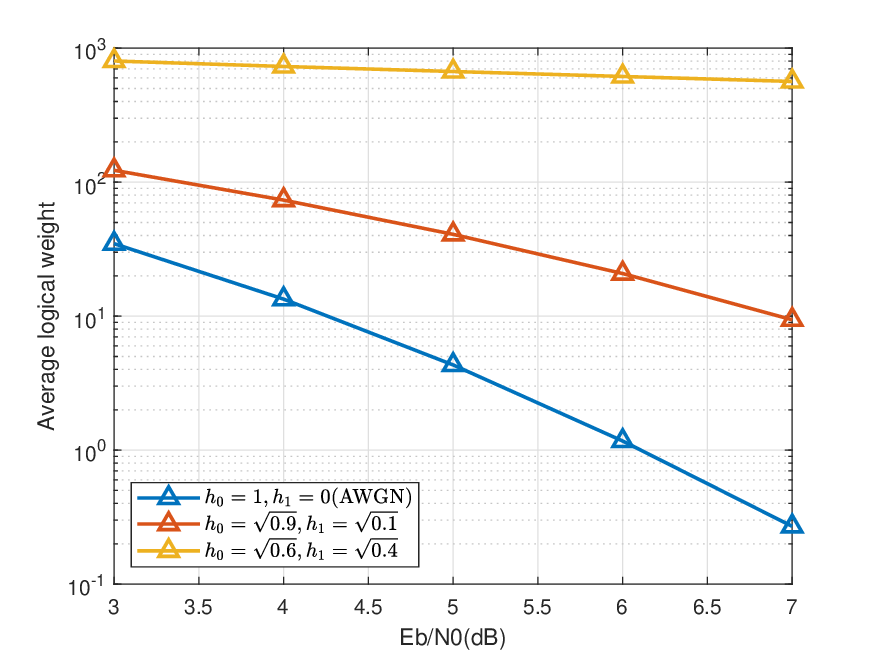}
		\caption{Average logistic weight of target EP for BCH(127, 113).}
		\label{fig:logistic}
	\end{figure}
	The notably poor performance of ORBGRAND ignoring channel memory can be explained by inspecting the average logistic weight of target EPs, which reflects the average number of queries required for successful decoding. In Fig. \ref{fig:logistic}, we observe that within the selected SNR range, for $h_0=1$ (i.e., memoryless channel without ISI), the average logistic weight of the target EPs is at most $30$, corresponding to about $2000$ required number of queries; for $h_0=\sqrt{0.9}$, the average logistic weight ranges between $10$ and $100$, which may not be reached by the preset maximum number of queries of $Q = 1.5 \times 10^5$ (corresponding to a logistic weight of $64$) for ORBGRAND; for $h_0=\sqrt{0.6}$, the average logistic weight exceeds $800$, far surpassing the capability of ORBGRAND with $Q = 1.5 \times 10^5$, rendering it entirely inoperative in this case.
	
	\subsection{Higher-order ISI Channels}\label{subsec:simulation-high}
	
	We then consider higher-order ISI channels. Unless stated otherwise, we still use the same choices of the maximum number of queries as those in the previous subsection. We fix the channel
	coefficients as $h_0=\sqrt{0.8}$, $h_1=\sqrt{0.15}$, and $h_2=\sqrt{0.05}$, corresponding to a channel memory order of $L=2$.
	
	\begin{figure}[h]%
		\centering
		\includegraphics[width=0.48\textwidth]{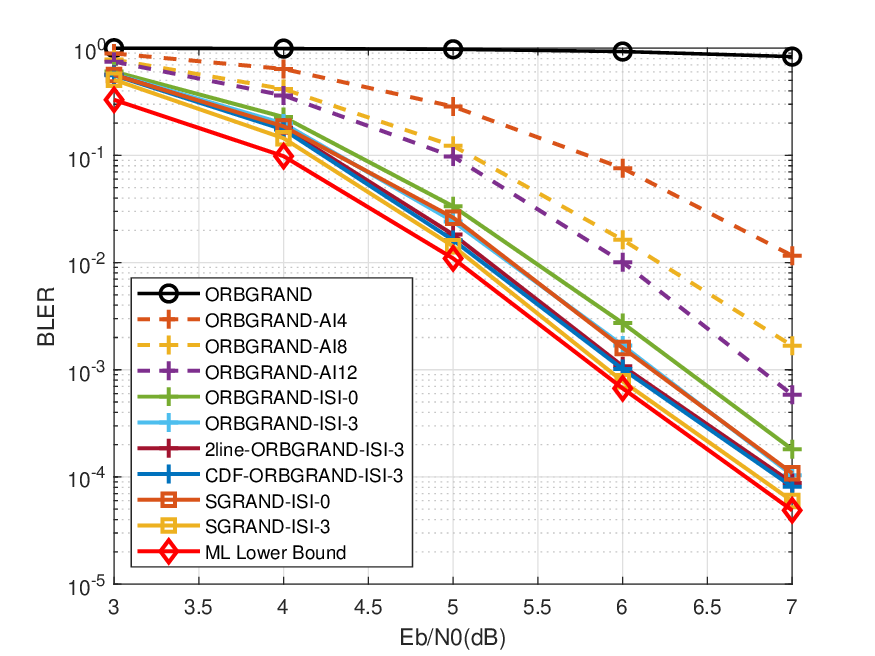}
		\caption{Performance in second-order ISI channel with $h_0=\sqrt{0.8}, h_1=\sqrt{0.15}, h_2=\sqrt{0.05}$ for CA-Polar(128, 114+6).}
		\label{fig:ISI0.8Polar}
	\end{figure}
	
	As discussed in Section \ref{subsec:higher-order-approximate}, due to the massive number of possible partially-decomposable error bursts, we only enumerate those of size no greater than $g$ to reduce the computational complexity. The resulting GRAND algorithms are therefore appended by ``-$g$'' accordingly.
	
	The experimental results for CA-Polar(128,114+6) and BCH(127,113) are illustrated in Fig. \ref{fig:ISI0.8Polar} and Fig. \ref{fig:ISI0.8BCH}, respectively. We can observe that in this scenario, ORBGRAND, which ignores channel memory, nearly ceases to function, with BLER remaining close to 100\% even at high SNR. In addition, we observe that increasing the inclusion of partially-decomposable error bursts improves the performance, narrowing the gap to the ML lower bound; for example, at a BLER of $10^{-3}$, SGRAND-ISI-3 achieves a 0.3dB gain over SGRAND-ISI-0, deviating from the ML lower bound of less than 0.1dB for both CA-Polar(128, 114+6) and BCH(127, 113). Furthermore, CDF-ORBGRAND-ISI-3 offers a 0.8dB gain over ORBGRAND-AI12 while maintaining a 0.2dB gap from the ML lower bound.
	
	\begin{figure}[H]%
		\centering
		\includegraphics[width=0.48\textwidth]{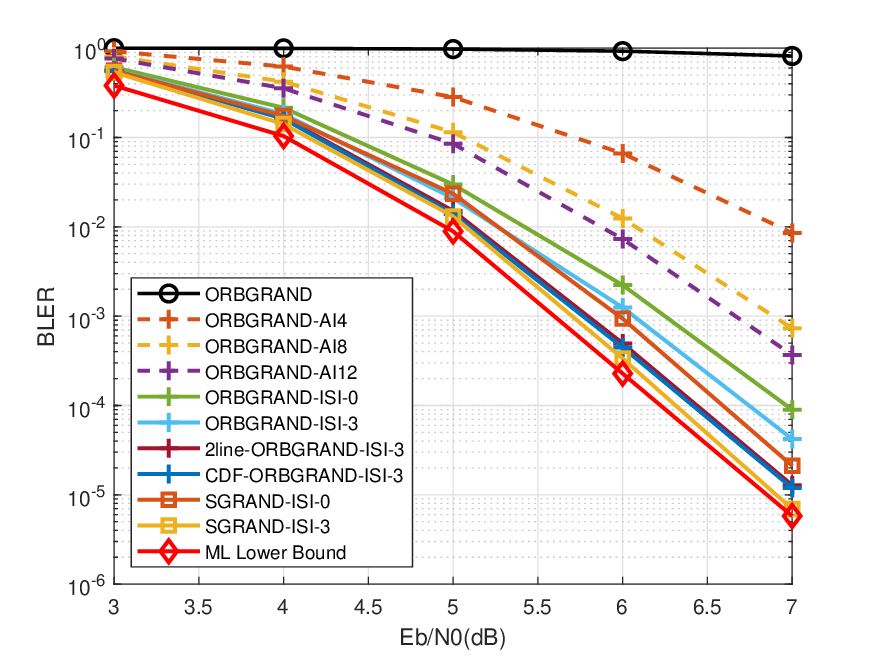}
		\caption{Performance in second-order ISI channel with $h_0=\sqrt{0.8}, h_1=\sqrt{0.15}, h_2=\sqrt{0.05}$ for BCH(127, 113).}
		\label{fig:ISI0.8BCH}
	\end{figure}

	\begin{table}[htbp]
		\begin{scriptsize}
			\centering
			\caption{Average number of queries for BCH(127,113) with $h_0=\sqrt{0.8}, h_1=\sqrt{0.15}, h_2=\sqrt{0.05}$.}
			\begin{tabular}{|c|c|c|c|c|}
				\hline
				\multicolumn{2}{|c|}{}	&3dB  &5dB  &7dB	\\ 
				\hline
				\multicolumn{2}{|c|}{ORBGRAND} &15725.2   	  	&15249.8       &13056.3    	\\ 
				\hline
				\multicolumn{2}{|c|}{ORBGRAND-AI4} &7147.6 / 7112.4		&2775.0 / 2758.6     &139.7 / 138.9    	\\  
				\hline
				\multicolumn{2}{|c|}{ORBGRAND-AI8} &7286.1 / 7102.3		&1338.9 / 1297.6     &20.4 / 20.1    	\\  
				\hline
				\multicolumn{2}{|c|}{ORBGRAND-AI12} &7100.7 / 6824.1		&1089.5 / 1041.9     &10.7 / 10.5   	\\  
				\hline
				\multicolumn{2}{|c|}{ORBGRAND-ISI-0} &7549.3 / 4953.6		&514.4 / 331.6     &3.6 / 2.9   	\\  
				\hline
				\multicolumn{2}{|c|}{ORBGRAND-ISI-3} &8664.0 / 4681.4	    &453.8 / 247.8     &2.5 / 2.1 	     \\  
				\hline
				\multicolumn{2}{|c|}{2line-ORBGRAND-ISI-3} &7060.7 / 4634.8	&277.2 / 199.2     &1.9 / 1.8   	
				\\  
				\hline
				\multicolumn{2}{|c|}{CDF-ORBGRAND-ISI-3} &7048.7 / 4608.5		&258.0 / 180.5     & 1.8 / 1.7  \\  
				\hline
				\multicolumn{2}{|c|}{SGRAND-ISI-0} &6197.2 / 4506.8		&336.5 / 250.7     &2.0 / 1.9  \\  
				\hline
				\multicolumn{2}{|c|}{SGRAND-ISI-3} &6618.6 / 4253.8  	&254.6 / 171.4     &1.6 / 1.5  \\  
				\hline
			\end{tabular}
			\label{table_Guess1}
		\end{scriptsize}
	\end{table}
	
	Table~\ref{table_Guess1} displays statistics of the average number of queries for considered algorithms. For ORBGRAND, each entry denotes the average total number of queries, whereas for the other algorithms, each entry is given in the form of candidate/valid, corresponding to the average numbers of queries for candidate EPs and valid EPs, respectively. We observe that increasing the inclusion of partially-decomposable error bursts reduces the average number of queries in most cases, while for ORBGRAND-ISI and SGRAND-ISI at $3$dB the total average number of queries increases. In addition, CDF-ORBGRAND-ISI-3 requires substantially fewer queries than ORBGRAND-AI and ORBGRAND-ISI-3. On the other hand, although SGRAND-ISI-3 yields fewer queries than CDF-ORBGRAND-ISI-3, nevertheless, its real-time computational demands result in the highest complexity of implementation, as pointed out earlier.
	
	\subsection{Complexity Analysis}\label{subsec:complexity}
	
	In this subsection, we provide an analysis of algorithm complexity. The procedure involved in ORBGRAND-ISI can be broken down into three main steps: (1) acquiring the hard detection sequence, (2) calculating the sequence reliability for each error burst, and (3) generating and testing the EPs. Compared with ORBGRAND-ISI, CDF-ORBGRAND-ISI and 2line-ORBGRAND-ISI only require additional table-lookup according to Fig. \ref{fig:Psi}, with rather marginal additional cost.
	
	In the first step, the receiver employs the Viterbi algorithm to obtain the hard detection sequence, with a complexity of $\mathcal{O}(N\cdot 2^L)$, where $N$ is code length and $L$ is ISI channel order. If the hard detection sequence is already a codeword, it is immediately declared as decoding result; otherwise, ORBGRAND-ISI proceeds to calculate and sort the sequence reliabilities for error bursts. The complexity of this second step is determined by the total number of sequence reliability calculations required, denoted as $\mathrm{tot}$, which can be calculated as follows:
	\begin{align}
		F(i)
		&=
		1+\sum_{j=\max\{1,i-L\}}^{i-1} F(j), \qquad i=1,2,\ldots,N, \nonumber\\
		\mathrm{tot}
		&=\sum_{i=1}^{N}F(i). 
	\end{align}
	Taking $L=2$ as an example, we can use the above recursive relationship to get
	\begin{align}\label{eq:tot_L2}
		\mathrm{tot}
		= \frac{(2\varphi+1)\varphi^{N+1}-(2\psi+1)\psi^{N+1}}{\sqrt{5}}-N-3,
	\end{align}
	where $\varphi \triangleq \frac{1+\sqrt{5}}{2}$ and $\psi \triangleq \frac{1-\sqrt{5}}{2}$. So $\mathrm{tot}$ grows exponentially with $N$, and it is necessary to employ the approximation strategy described in Section \ref{subsec:higher-order-approximate} to alleviate the computational burden.
	
	The computational complexity of the third step can be quantified by the number of queries, as displayed in Table \ref{table_Guess1}; see also Fig. \ref{fig:queryISI0.9Polar} and Fig. \ref{fig:queryISI0.9BCH}.
	
	For ORBGRAND-AI, as described in \cite{duffy2023using}, its procedure can also be divided into the same three steps. Therefore, we can compare its complexity with ours. Taking $\mathrm{E_b/N_0}=4$dB in Fig.~\ref{fig:ISI0.9Polar} as an example, the results are summarized in Table \ref{table_2}, where \textit{Step~1} refers to the complexity of the Viterbi algorithm (assuming $L \leq \text{block length}$), \textit{Step~2} denotes the number of sequence reliabilities to be computed (or reliabilities in ORBGRAND-AI), and \textit{Step~3} reports the average number of queries (each entry is given in the form of candidate/valid, corresponding to the average numbers of queries for candidate EPs and valid EPs, respectively). We observe that ORBGRAND-ISI, 2line-CDF-ORBGRAND-ISI, and CDF-ORBGRAND-ISI incur similar computational complexity. Compared to ORBGRAND-AI12, the number of computed reliabilities is reduced by about a factor of five, and the corresponding number of queries is also reduced accordingly.
	
	\begin{table}[t]
		\centering
		\caption{Comparison of complexity between ORBGRAND-AI and ORBGRAND-ISI at $\mathrm{Eb/N0}=4\,\mathrm{dB}$ in Fig.~\ref{fig:ISI0.9Polar}.}
		\begin{tabular}{|c|c|c|c|}
			\hline
			\multirow{2}{*}{}	&\multirow{1}{*}{Step 1} &\makecell{Step 2}  &Step 3\\ 
			\hline
			{ORBGRAND-AI4} &$\mathcal{O}(N\cdot 2^L)$ & 480 & 2328.5 / 2311.2  		  	\\ 
			\hline
			{ORBGRAND-AI8} &$\mathcal{O}(N\cdot 2^L)$ &4080 & 1948.6 / 1885.2  	  	\\ 
			\hline
			{ORBGRAND-AI12} &$\mathcal{O}(N\cdot 2^L)$ &41205 & 2024.8 / 1958.3 	  	\\ 
			\hline
			{ORBGRAND-ISI} &$\mathcal{O}(N\cdot 2^L)$ &8256 & 1440.0 /     1007.1\\ 
			\hline
			{2line-ORBGRAND-ISI} &$\mathcal{O}(N\cdot 2^L)$ &8256 & 1320.6 / 1015.8\\ 
			\hline
			{CDF-ORBGRAND-ISI} &$\mathcal{O}(N\cdot 2^L)$ &8256 & 1138.0 / 914.7\\ 
			\hline
		\end{tabular}
		\label{table_2}
	\end{table}

	\section{Conclusion}\label{Conclusion}
	
	In this paper, we exploit the recently proposed GRAND paradigm to optimally handle channel memory. To describe the structure and statistics of EPs arising from ISI, we formulate an error burst representation and, on this basis, define sequence reliability. Based on this, we propose SGRAND-ISI, which can achieve ML decoding, and ORBGRAND-ISI, which is suitable for hardware implement. Numerical results show that, relative to conventional decoders that ignore channel memory, the proposed schemes deliver multi-dB gains and operate within 0.1--0.2dB of the ML lower bound. Moreover, compared with ORBGRAND-AI, which also targets channels with memory, our approach achieves an additional $0.6$--$1$dB gain while requiring substantially lower decoding complexity.
	
	It was shown in~\cite{li2025orbgrand} that CDF-ORBGRAND is exactly capacity-achieving for general memoryless binary-input channels. Whether its ISI extension, CDF-ORBGRAND-ISI, remains capacity-achieving is still unaddressed. Future works include characterizing the achievable rates of ORBGRAND-ISI and CDF-ORBGRAND-ISI for channels with memory, and to optimize the approximation strategy of SGRAND-ISI for higher-order ISI channels to further reduce its performance loss.
	
	\section{Appendices}
	\subsection{Proof of Theorem \ref{thm:SGRAND-ISI-ML}}\label{Equivalence}
	
	Denote $\mathcal{B}_t(m)=\mathcal{I}_t(\underline{e}(m))\cup \left\{\text{R}_1\big(\mathcal{I}_t(\underline{e}(m))\big)\right\}$ for $t=1,\cdots,\zeta(\mathcal{I}(\underline{e}(m)))$ and $\mathcal{B}(m)=\mathcal{B}_1(m)\cup\mathcal{B}_2(m)\cdots\cup\mathcal{B}_{\zeta(\mathcal{I}(\underline{e}(m)))}(m)$. Based on the defintion of  (\ref{Rel}), we have 
	\begin{align}\label{Sum}
		&\sum_{t=1}^{\zeta(\mathcal{I}(\underline{e}(m)))} \text{Rel}(\mathcal{I}_t(\underline{e}(m)))\nonumber\\
		=&\sum_{t=1}^{\zeta(\mathcal{I}(\underline{e}(m)))}\sum_{i=\min(\mathcal{I}_t(\underline{e}(m)))}^{\text{R}_1(\mathcal{I}_t(\underline{e}(m)))} \ln\frac{P_{\mathsf{Y}_i\vert \underline{\mathsf{X}}_{i-1:i}}(y_i\vert \underline{x}^*_{i-1:i})}{P_{\mathsf{Y}_i\vert \underline{\mathsf{X}}_{i-1:i}}(y_i\vert [f_{\mathcal{I}_t(\underline{e}(m))}(\underline{x}^*)]_{i-1:i})}\nonumber\\
		=&\sum_{t=1}^{\zeta(\mathcal{I}(\underline{e}(m)))}\sum_{i=\min(\mathcal{I}_t(\underline{e}(m)))}^{\text{R}_1(\mathcal{I}_t(\underline{e}(m)))} \ln\frac{P_{\mathsf{Y}_i\vert \underline{\mathsf{X}}_{i-1:i}}(y_i\vert \underline{x}^*_{i-1:i})}{P_{\mathsf{Y}_i\vert \underline{\mathsf{X}}_{i-1:i}}(y_i\vert \underline{x}(m)_{i-1:i})}\nonumber\\
		&+\sum_{i\notin\mathcal{B}(m)}\ln\frac{P_{\mathsf{Y}_i\vert \underline{\mathsf{X}}_{i-1:i}}(y_i\vert \underline{x}^*_{i-1:i})}{P_{\mathsf{Y}_i\vert \underline{\mathsf{X}}_{i-1:i}}(y_i\vert \underline{x}^*_{i-1:i})}\nonumber\\
		=&\sum_{i=1}^{N}\ln\frac{P_{\mathsf{Y}_i\vert \underline{\mathsf{X}}_{i-1:i}}(y_i\vert \underline{x}^*_{i-1:i})}{P_{\mathsf{Y}_i\vert \underline{\mathsf{X}}_{i-1:i}}(y_i\vert \underline{x}(m)_{i-1:i})}\nonumber\\
		=&\Lambda(\underline{x}^*, \underline{y})-\Lambda(\underline{x}(m), \underline{y}),
	\end{align}
	where we have used facts that $f_{\mathcal{I}(\underline{e}(m))}(\underline{x}^*)=\underline{x}(m)$ and $x_{i-1}(m)=x_{i-1}^*, x_i(m)=x_i^*$ if $i\notin \mathcal{B}(m)$.
	
	Substituting (\ref{Sum}) into (\ref{Guess}), we have
	\begin{align}
		\hat{m}=&\arg\min\limits_{m=1,\cdots,M}\quad \sum_{t=1}^{\zeta(\mathcal{I}(\underline{e}(m)))} \gamma\big(\underline{y},\mathcal{I}_t(\underline{e}(m))\big)\nonumber\\
		=&\arg\min\limits_{m=1,\cdots,M}\quad\sum_{t=1}^{\zeta(\mathcal{I}(\underline{e}(m)))} \text{Rel}(\mathcal{I}_t(\underline{e}(m)))\nonumber\\
		=&\arg\min\limits_{m=1,\cdots,M}\quad \left(\Lambda(\underline{x}^*, \underline{y})-\Lambda(\underline{x}(m), \underline{y})\right)\nonumber\\
		=&\arg\max\limits_{m=1,\cdots,M}\quad  \Lambda(\underline{x}(m), \underline{y})\nonumber\\
		=& m^*,
	\end{align}
	thereby completing the proof.
	
	\subsection{Proof of Lemma \ref{lemma:1}}\label{Lemma1}
	Based on the given conditions, we can write $\mathcal{W}$ as $$\mathcal{W}=\left\{\min(\mathcal{W}_1), \cdots, \max(\mathcal{W}_1), \min(\mathcal{W}_2), \cdots, \max(\mathcal{W}_2)\right\},$$
	for which we have
	\begin{align}\label{Rel_W}
		&\text{Rel}(\mathcal{W})\nonumber\\
		=&\Lambda(\underline{x}^*, \underline{y})-\Lambda(f_{\mathcal{W}}(\underline{x}^*), \underline{y})\nonumber\\
		=&\sum_{i=\min(\mathcal{W}_1)}^{\text{R}_1(\mathcal{W}_2)}\ln\frac{P_{\mathsf{Y}_i\vert \underline{\mathsf{X}}_{i-1:i}}(y_i\vert \underline{x}_{i-1:i}^*)}{P_{\mathsf{Y}_i\vert \underline{\mathsf{X}}_{i-1:i}}(y_i\vert [f_{\mathcal{W}}(\underline{x}^*)]_{i-1:i})}.
	\end{align}
	
	Given that $\min(\mathcal{W}_2)-\max(\mathcal{W}_1) > 1$, and noting that the following identities hold,
	\begin{align*}
		\left\{\begin{aligned}
			[f_{\mathcal{W}}(\underline{x}^*)]_i&=[f_{\mathcal{W}_1}(\underline{x}^*)]_i\quad \text{L}_1(\mathcal{W}_1)\leq i\leq \max(\mathcal{W}_1)+1,\\
			x_i^*&=[f_{\mathcal{W}}(\underline{x}^*)]_i \quad\max(\mathcal{W}_1)+1\leq i\leq \min(\mathcal{W}_2)-1,\\
			[f_{\mathcal{W}}(\underline{x}^*)]_i&=[f_{\mathcal{W}_2}(\underline{x}^*)]_i\quad\min(\mathcal{W}_2)-1\leq i\leq \text{R}_1(\mathcal{W}_2),\\
		\end{aligned}\right.
	\end{align*}
	we can obtain
	\begin{align}\label{sum2}
		&\text{Rel}(\mathcal{W})\nonumber\\
		=&\sum_{i=\min(\mathcal{W}_1)}^{\max(\mathcal{W}_1)+1} \ln\frac{P_{\mathsf{Y}_i\vert \underline{\mathsf{X}}_{i-1:i}}(y_i\vert \underline{x}_{i-1:i}^*)}{P_{\mathsf{Y}_i\vert \underline{\mathsf{X}}_{i-1:i}}(y_i\vert [f_{\mathcal{W}}(\underline{x}^*)]_{i-1:i})}\nonumber\\
		&+\sum_{i=\max(\mathcal{W}_1)+2}^{\min(\mathcal{W}_2)-1} \ln\frac{P_{\mathsf{Y}_i\vert \underline{\mathsf{X}}_{i-1:i}}(y_i\vert \underline{x}_{i-1:i}^*)}{P_{\mathsf{Y}_i\vert \underline{\mathsf{X}}_{i-1:i}}(y_i\vert [f_{\mathcal{W}}(\underline{x}^*)]_{i-1:i})}\nonumber\\
		&+\sum_{i=\min(\mathcal{W}_2)}^{\text{R}_1(\mathcal{W}_2)} \ln\frac{P_{\mathsf{Y}_i\vert \underline{\mathsf{X}}_{i-1:i}}(y_i\vert \underline{x}_{i-1:i}^*)}{P_{\mathsf{Y}_i\vert \underline{\mathsf{X}}_{i-1:i}}(y_i\vert [f_{\mathcal{W}}(\underline{x}^*)]_{i-1:i})}\nonumber\\
		=& \text{Rel}(\mathcal{W}_1)+\text{Rel}(\mathcal{W}_2).
	\end{align}
	This completes the proof.
	
	\subsection{Proof of Lemma \ref{lemma:2}}\label{Lemma2}
	For each sum term on the right-hand side of (\ref{Lem2}), after some algebraic manipulations, we obtain
	\begin{align}
		&\text{Rel}(\left\{i,i+1\right\})-\text{Rel}(\left\{i\right\})-\text{Rel}(\left\{i+1\right\})\nonumber\\
		=&\ln\frac{P_{\mathsf{Y}_{i+1}\vert \underline{\mathsf{X}}_{i:i+1}}(y_{i+1}\vert x_{i}^*,x_{i+1}^*\oplus 1)}{P_{\mathsf{Y}_{i+1}\vert \underline{\mathsf{X}}_{i:i+1}}(y_{i+1}\vert x_{i}^*,x_{i+1}^*)}\nonumber\\
		&+\ln\frac{P_{\mathsf{Y}_{i+1}\vert \underline{\mathsf{X}}_{i:i+1}}(y_{i+1}\vert x_{i}^*\oplus 1,x_{i+1}^*)}{P_{\mathsf{Y}_{i+1}\vert \underline{\mathsf{X}}_{i:i+1}}(y_{i+1}\vert x_{i}^*\oplus 1,x_{i+1}^*\oplus 1)}\nonumber.
	\end{align}
	On the other hand, the left-hand side of (\ref{Lem2}) can be written as 
	\begin{align}
		&\text{Rel}(\mathcal{W})-\Big(\text{Rel}(\left\{a\right\})+\cdots+\text{Rel}(\left\{a+\lvert \mathcal{W}\rvert -1\right\})\Big)\nonumber\\
		=&\sum_{i=a+1}^{a+\lvert \mathcal{W}\rvert-1}\ln\frac{P_{\mathsf{Y}_i\vert \underline{\mathsf{X}}_{i-1:i}}(y_i\lvert x_{i-1}^*,x_i^*\oplus 1)}{P_{\mathsf{Y}_i\vert \underline{\mathsf{X}}_{i-1:i}}(y_i\lvert x_{i-1}^*,x_i^*)}\nonumber\\
		&+\sum_{i=a+1}^{a+\lvert \mathcal{W}\rvert-1}\ln\frac{P_{\mathsf{Y}_i\vert \underline{\mathsf{X}}_{i-1:i}}(y_i\lvert x_{i-1}^*\oplus 1,x_i^*)}{P_{\mathsf{Y}_i\vert \underline{\mathsf{X}}_{i-1:i}}(y_i\lvert x_{i-1}^*\oplus 1,x_i^*\oplus 1)}\nonumber\\
		=&\sum_{i=a}^{a+\lvert \mathcal{W}\rvert -2}\Big(\text{Rel}(\left\{i,i+1\right\})-\text{Rel}(\left\{i\right\})-\text{Rel}(\left\{i+1\right\})\Big).
	\end{align}
	This completes the proof.
	
	\subsection{Proof of Lemma \ref{lemma:3}}\label{Lemma3}
	Define
	\begin{align}
		\delta_i\triangleq&\ln\frac{P_{\mathsf{Y}_{i+1}\vert \underline{\mathsf{X}}_{i:i+1}}(y_{i+1}\vert 1,1)}{P_{\mathsf{Y}_{i+1}\vert \underline{\mathsf{X}}_{i:i+1}}(y_{i+1}\vert 0,1)}+\ln\frac{P_{\mathsf{Y}_{i+1}\vert \underline{\mathsf{X}}_{i:i+1}}(y_{i+1}\vert 0,0)}{P_{\mathsf{Y}_{i+1}\vert \underline{\mathsf{X}}_{i:i+1}}(y_{i+1}\vert 1,0)}.
	\end{align}
	Then, it follows that
	\begin{align}\label{Delta}
		\delta_i=&\frac{1}{2\sigma^2}\Bigg(-\Big(y_{i+1}-(h_0+h_1)\Big)^2-\Big(y_{i+1}+(h_0+h_1)\Big)^2\nonumber\\
		&+\Big(y_{i+1}-(h_1-h_0)\Big)^2+\Big(y_{i+1}-(h_0-h_1)\Big)^2\Bigg)\nonumber\\
		=&\frac{-4h_0h_1}{\sigma^2},
	\end{align}
	where we have used the fact that BPSK modulation maps bit 0 to $+1$ and bit 1 to $-1$. Therefore, we obtain
	\begin{align}
		&\text{Rel}(\left\{i,i+1\right\})-\text{Rel}(\left\{i\right\})-\text{Rel}(\left\{i+1\right\})\nonumber\\
		=&\frac{4h_0h_1}{\sigma^2} [2\times\mathbf{1}(x_i^*=x_{i+1}^*)-1].
	\end{align}
	This completes the proof.
	
	\subsection{Proof of Lemma \ref{lem:4}}\label{Lemma4}
	
	Based on Definition~\ref{defn:sequence-reliability}, we have
	\begin{align}\label{Rel_W_G}
		\text{Rel}(\mathcal{W})
		=
		\sum_{i=\min(\mathcal{W}_1)}^{\mathrm{R}_L(\mathcal{W}_2)}
		\ln
		\frac{
			P_{\mathsf{Y}_i\vert \underline{\mathsf{X}}_{i-L:i}}
			(y_i\vert \underline{x}_{i-L:i}^*)
		}{
			P_{\mathsf{Y}_i\vert \underline{\mathsf{X}}_{i-L:i}}
			(y_i\vert [f_{\mathcal{W}}(\underline{x}^*)]_{i-L:i})
		}.
	\end{align}
	Since $\min(\mathcal{W}_2)-\max(\mathcal{W}_1)>L$, the above summation can be decomposed into three parts. Moreover, using the following identities,
	\begin{align*}
		\left\{
		\begin{aligned}
			[f_{\mathcal{W}}(\underline{x}^*)]_i
			&=
			[f_{\mathcal{W}_1}(\underline{x}^*)]_i,
			\quad \mathrm{L}_L(\mathcal{W}_1)\le i\le \max(\mathcal{W}_1)+L,\\
			[f_{\mathcal{W}}(\underline{x}^*)]_i
			&=
			x_i^*,
			\quad \max(\mathcal{W}_1)+1\le i\le \min(\mathcal{W}_2)-1,\\
			[f_{\mathcal{W}}(\underline{x}^*)]_i
			&=
			[f_{\mathcal{W}_2}(\underline{x}^*)]_i,
			\quad \min(\mathcal{W}_2)-L\le i\le \mathrm{R}_L(\mathcal{W}_2),
		\end{aligned}
		\right.
	\end{align*}
	the remainder of the proof follows in the same manner as that of Lemma~\ref{lemma:1} in Appendix~\ref{Lemma1}.
	
	\subsection{Proof of Lemma \ref{lem:5}}\label{Lemma5}
	
	First, consider the case where $\max(\mathcal{W}_1)+L< N$. We have
	\begin{align}\label{Rel_W_G2}
		&\text{Rel}(\mathcal{W})\nonumber\\
		=&\sum_{i=\min(\mathcal{W}_1)}^{\min(\mathcal{W}_2)-1}
		\ln\frac{P_{\mathsf{Y}_i\vert \underline{\mathsf{X}}_{i-L:i}}(y_i\vert \underline{x}_{i-L:i}^*)}
		{P_{\mathsf{Y}_i\vert \underline{\mathsf{X}}_{i-L:i}}(y_i\vert [f_{\mathcal{W}}(\underline{x}^*)]_{i-L:i})}\nonumber\\
		&+\sum_{i=\max(\mathcal{W}_1)+L+1}^{\mathrm{R}_L(\mathcal{W}_2)}
		\ln\frac{P_{\mathsf{Y}_i\vert \underline{\mathsf{X}}_{i-L:i}}(y_i\vert \underline{x}_{i-L:i}^*)}
		{P_{\mathsf{Y}_i\vert \underline{\mathsf{X}}_{i-L:i}}(y_i\vert [f_{\mathcal{W}}(\underline{x}^*)]_{i-L:i})}\nonumber\\
		&+\sum_{i=\min(\mathcal{W}_2)}^{\max(\mathcal{W}_1)+L}
		\ln\frac{P_{\mathsf{Y}_i\vert \underline{\mathsf{X}}_{i-L:i}}(y_i\vert \underline{x}_{i-L:i}^*)}
		{P_{\mathsf{Y}_i\vert \underline{\mathsf{X}}_{i-L:i}}(y_i\vert [f_{\mathcal{W}}(\underline{x}^*)]_{i-L:i})}.
	\end{align}
	Under the condition that $2\le \min(\mathcal{W}_2)-\max(\mathcal{W}_1)\le L$, using the identities
	\begin{align*}
		\left\{
		\begin{aligned}
			[f_{\mathcal{W}}(\underline{x}^*)]_i
			&=
			[f_{\mathcal{W}_1}(\underline{x}^*)]_i,
			\quad \mathrm{L}_L(\mathcal{W}_1)\le i\le \min(\mathcal{W}_2)-1,\\
			[f_{\mathcal{W}}(\underline{x}^*)]_i
			&=
			[f_{\mathcal{W}_2}(\underline{x}^*)]_i,
			\quad \max(\mathcal{W}_1)+L+1\le i\le \mathrm{R}_L(\mathcal{W}_2),
		\end{aligned}
		\right.
	\end{align*}
	and applying straightforward algebraic manipulations to the expressions of $\text{Rel}(\mathcal{W}_1)$ and $\text{Rel}(\mathcal{W}_2)$, we obtain \eqref{partially}.

	Next, consider the case where $\max(\mathcal{W}_1)+L\ge N$. We have
	\begin{align}\label{Rel_W_G3}
		&\text{Rel}(\mathcal{W})\nonumber\\
		=&\sum_{i=\min(\mathcal{W}_1)}^{\min(\mathcal{W}_2)-1}
		\ln\frac{P_{\mathsf{Y}_i\vert 	\underline{\mathsf{X}}_{i-L:i}}(y_i\vert \underline{x}_{i-L:i}^*)}
		{P_{\mathsf{Y}_i\vert \underline{\mathsf{X}}_{i-L:i}}(y_i\vert [f_{\mathcal{W}}(\underline{x}^*)]_{i-L:i})}\nonumber\\
		&+\sum_{i=\min(\mathcal{W}_2)}^{N}
		\ln\frac{P_{\mathsf{Y}_i\vert \underline{\mathsf{X}}_{i-L:i}}(y_i\vert \underline{x}_{i-L:i}^*)}
		{P_{\mathsf{Y}_i\vert \underline{\mathsf{X}}_{i-L:i}}(y_i\vert [f_{\mathcal{W}}(\underline{x}^*)]_{i-L:i})}.
	\end{align}
	By similar algebraic manipulations, \eqref{partially} also follows. This completes the proof.

	\bibliographystyle{IEEEtran}
	\bibliography{ref.bib}
	
\end{document}